\newcommand{\mbf}[1]{{\ensuremath{\bm{{#1}}}}}
\newcommand{\abs}[1]{{\ensuremath{\left| #1 \right|}}}
\newcommand{\norm}[1]{\ensuremath{\left\| #1 \right\|}}
\newcommand{\MATLAB}{\ensuremath{\textrm{MATLAB}^\circledR}}
\let\hat\widehat
\let\ams@underbrace=\underbrace
\def\underbrace{\kernel@ifnextchar[{\underbrace@}{\underbrace@[l]}}
\def\underbrace@[#1]#2_#3{%
  \ifx#1c\relax
    \let\ubr@align\centering%
  \else
    \ifx#1l\relax
      \let\ubr@align\raggedright%
    \else
      \ifx#1r\relax
        \let\ubr@align\raggedleft%
      \else
        \ifx#1f\relax
          \let\ubr@align\relax%
        \else
          \message{`#1' isn't a valid alignment specification for the underbrace command}%
        \fi
      \fi
    \fi
  \fi
  \setbox0=\hbox{$\displaystyle#2$}%
  \ams@underbrace{#2}_{\parbox[t]{\the\wd0}{\ubr@align\ensuremath{#3}}}%
}
\let\ubr@align\centering
\NewDocumentCommand\col{g}{%
  \IfNoValueTF{#1}{\ensuremath{\mathrm{vec}}}{\ensuremath{\mathrm{vec}}\of{#1}}%
}
\NewDocumentCommand\of{og}{%
  \IfNoValueTF{#1}%
    { \IfNoValueTF{#2}{}{\!\({#2}\)} }%
    { \IfNoValueTF{#2}{\!\[{#1}\]}{\!\{{#2}\}} }%
}
\DeclareMathOperator{\Diff}{\ipaclap{D}{\raisebox{.204em}{\textpalhook}\kern.44em}\kern-.1em}
\NewDocumentCommand\diff{g}{%
  \IfNoValueTF{#1}
  {\text{\texthtd}}
  {\text{\texthtd}\of{#1}}%
}
\RenewDocumentCommand\ln{g}{%
  \IfNoValueTF{#1}{\mathrm{ln\ }}{\mathrm{ln}\of{#1}}%
}
\NewDocumentCommand\Real{og}{%
  \IfNoValueTF{#1}%
    { \IfNoValueTF{#2}{\mathcal{R}\!\!\mathpzc{e}}{\mathcal{R}\!\!\mathpzc{e}\!\{{#2}\}} }%
    { \IfNoValueTF{#2}{\mathcal{R}\!\!\mathpzc{e}\!\[{#1}\]}{\mathcal{R}\!\!\mathpzc{e}\!\({#2}\)} }%
}
\NewDocumentCommand\Imag{og}{%
  \IfNoValueTF{#1}%
    { \IfNoValueTF{#2}{\mathcal{I}\!\!\mathpzc{m}}{\mathcal{I}\!\!\mathpzc{m}\!\{{#2}\}} }%
    { \IfNoValueTF{#2}{\mathcal{I}\!\!\mathpzc{m}\!\[{#1}\]}{\mathcal{I}\!\!\mathpzc{m}\!\({#2}\)} }%
}
\RenewDocumentCommand\cos{g}{%
  \IfNoValueTF{#1}{\mathrm{cos}}{\mathrm{cos}\of{#1}}%
}
\RenewDocumentCommand\sin{g}{%
  \IfNoValueTF{#1}{\mathrm{sin}}{\mathrm{sin}\of{#1}}%
}
\RenewDocumentCommand\tan{g}{%
  \IfNoValueTF{#1}{\mathrm{tan}}{\mathrm{tan}\of{#1}}%
}
\RenewDocumentCommand\arccos{g}{%
  \IfNoValueTF{#1}{\mathrm{arccos}}{\mathrm{arccos}\of{#1}}%
}
\RenewDocumentCommand\arcsin{g}{%
  \IfNoValueTF{#1}{\mathrm{arcsin}}{\mathrm{arcsin}\of{#1}}%
}
\RenewDocumentCommand\arctan{g}{%
  \IfNoValueTF{#1}{\mathrm{arctan}}{\mathrm{arctan}\of{#1}}%
}
\RenewDocumentCommand\cot{g}{%
  \IfNoValueTF{#1}{\mathrm{cot}}{\mathrm{cot}\of{#1}}%
}
\newcommand{\st}[1][14pt]{\text{\ \hspace{#1} s.t. \hspace{#1}}}
\NewDocumentCommand\tr{g}{%
  \IfNoValueTF{#1}{\mathrm{tr}}{\mathrm{tr}\of{#1}}%
}
\NewDocumentCommand\diag{og}{%
  \IfNoValueTF{#1}%
    { \IfNoValueTF{#2}{\ensuremath{\mathrm{diag}}}{\ensuremath{\mathrm{diag}\of{#2}}} }%
    { \IfNoValueTF{#2}{\ensuremath{\mathrm{diag}\of[#1]}}{\ensuremath{\mathrm{diag}\of[]{#2}}} }%
}
\RenewDocumentCommand\exp{g}{%
  \IfNoValueTF{#1}{\ensuremath{\mathrm{exp}}}{\ensuremath{\mathrm{exp}}\of{#1}}%
}
\NewDocumentCommand\realPart{g}{%
  \IfNoValueTF{#1}{\mathrm{Re}}{\mathrm{Re}\of{#1}}%
}
\NewDocumentCommand\imagPart{g}{%
  \IfNoValueTF{#1}{\mathrm{Im}}{\mathrm{Im}\of{#1}}%
}
\newcommand{\Operator}[3][]{\ensuremath{{\mathrm{#2}}_{#1}\!\left[ #3 \right]}}
\newcommand{\E}[2][]{\Operator[#1]{E}{#2}}
\newcommand{\Var}[2][]{\Operator[#1]{Var}{#2}}
\NewDocumentCommand\C{g}{%
  \IfNoValueTF{#1}{\mathrm{Cov}}{\mathrm{Cov}\of{#1}}%
}
\NewDocumentCommand\DKL{gg}{%
  \IfNoValueTF{#1}{\mathrm{D_{KL}}}{\mathrm{D_{KL}}\of{\left. #1\ \middle\|\ #2 \right.}}
  }
\NewDocumentCommand\ent{ggg}{%
  \IfNoValueTF{#1}{\mathrm{H}}{        	
  \IfNoValueTF{#2}{\mathrm{H}\of{#1}}   
  {\mathrm{H}\of{\left. #1\ \middle|\ #2 \right.}} } 
}
\renewcommand{\(}{\ensuremath{\left(}}
\renewcommand{\)}{\ensuremath{\right)}}
\renewcommand{\[}{\ensuremath{\left[}}
\renewcommand{\]}{\ensuremath{\right]}}
\let\oldBracketLeft\{
\let\oldBracketRight\}
\renewcommand{\{}{\ensuremath{\left\oldBracketLeft}}
\renewcommand{\}}{\ensuremath{\right\oldBracketRight}}
\DeclareMathOperator*{\argmax}{arg\,max}
\DeclareMathOperator*{\argmin}{arg\,min}
\DeclareMathOperator*{\gtlt}{\ensuremath{\begin{array}{c} > \vspace{-6pt} \\ < \end{array}}}
\NewDocumentCommand\F{og}{%
  \IfNoValueTF{#1}%
    { \IfNoValueTF{#2}{\mathcal{F}}{\mathcal{F}\!\{{#2}\}} }%
    { \IfNoValueTF{#2}{\mathcal{F}\!\[{#1}\]}{\mathcal{F}\!\({#2}\)} }%
}
\NewDocumentCommand\FInv{og}{%
  \IfNoValueTF{#1}%
    { \IfNoValueTF{#2}{\mathcal{F}^{-1}}{\mathcal{F}^{-1}\!\{{#2}\}} }%
    { \IfNoValueTF{#2}{\mathcal{F}^{-1}\!\[{#1}\]}{\mathcal{F}^{-1}\!\({#2}\)} }%
}
\NewDocumentCommand\rect{g}{%
  \IfNoValueTF{#1}
  {\ensuremath{\mathrm{rect}}}
  {\ensuremath{\mathrm{rect}\of{#1}}}%
}
\NewDocumentCommand\sinc{g}{%
  \IfNoValueTF{#1}
  {\ensuremath{\mathrm{sinc}}}
  {\ensuremath{\mathrm{sinc}\of{#1}}}%
}
\NewDocumentCommand\supp{g}{%
  \IfNoValueTF{#1}
  {\ensuremath{\mathrm{supp}}}
  {\ensuremath{\mathrm{supp}\of{#1}}}%
}
\DeclareMathOperator*{\defeq}{\stackrel{\mathrm{def}}{=}}
\let\oldMathcal\mathcal
\renewcommand{\mathcal}[1]{\ensuremath{\oldMathcal{#1}}}
\def\foreach#1#2#3{%
  \@test@foreach{#1}{#2}#3,\@end@token
}
\def\@swallow#1{}
\def\@test@foreach#1#2{%
  \@ifnextchar\@end@token%
    {\@swallow}%
    {\@foreach{#1}{#2}}%
}
\def\@foreach#1#2#3,#4\@end@token{%
  #1{#2}{#3}%
  \@test@foreach{#1}{#2}#4\@end@token%
}
\NewDocumentCommand{\bd}{}{\mbf{d}}
\NewDocumentCommand{\bg}{}{\mbf{g}}
\NewDocumentCommand{\br}{}{\mbf{r}}
\NewDocumentCommand{\bs}{}{\mbf{s}}
\NewDocumentCommand{\bu}{}{\mbf{u}}
\NewDocumentCommand{\bx}{}{\mbf{x}}
\NewDocumentCommand{\bD}{}{\mbf{D}}
\NewDocumentCommand{\bG}{}{\mbf{G}}
\NewDocumentCommand{\bH}{}{\mbf{H}}
\NewDocumentCommand{\bI}{}{\mbf{I}}
\NewDocumentCommand{\bdelta}{}{\mbf{\delta}}
\NewDocumentCommand{\bmu}{}{\mbf{\mu}}
\NewDocumentCommand{\btheta}{}{\mbf{\theta}}
\NewDocumentCommand{\bvarepsilon}{}{\mbf{\varepsilon}}
\NewDocumentCommand{\bSigma}{}{\mbf{\Sigma}}
\NewDocumentCommand{\bTheta}{}{\mbf{\Theta}}
\NewDocumentCommand{\bzero}{}{\mbf{0}}
\journalname{Journal of Mathematical Imaging and Vision}
\begin{document}

\title{Testing that a Local Optimum of the Likelihood is Globally Optimum using Reparameterized Embeddings
}
\subtitle{Applications to Wavefront Sensing}


\author{Joel W. LeBlanc$^{1,2}$ \and
        Brian J. Thelen$^{1,2}$ \and
		Alfred O. Hero$^{1}$}


\institute{
\Letter\ Joel W. LeBlanc$^{1,2}$ \at
\email{jwleblan@umich.edu}           
	\and
$^1$ University of Michigan \at
Ann Arbor, Michigan, 48109, USA
	\and
$^2$ Michigan Tech Research Institute \at
Ann Arbor, Michigan, 48105, USA
}

\date{Received: date / Accepted: date}

\authorrunning{Joel W. LeBlanc et al.}

\maketitle

\begin{abstract}
Many mathematical imaging problems are posed as non-convex optimization problems. When numerically tractable global optimization procedures are not available, one is often interested in testing ex post facto whether or not a locally convergent algorithm has found the globally optimal solution. When the problem is formulated in terms of maximizing the likelihood function under a statistical model for the measurements, one can construct a statistical test that a local maximum is in fact the global maximum.  A one-sided test is proposed for the case that the statistical model is a member of the generalized location family of probability distributions, a condition often satisfied in imaging and other inverse problems. We propose a general method for improving the accuracy of the test by reparameterizing the likelihood function to embed its domain into a higher dimensional parameter space. We show that the proposed global maximum testing method results in improved accuracy and reduced computation for a physically-motivated joint-inverse problem arising in camera-blur estimation. 

\keywords{inverse problems \and parameter estimation \and maximum likelihood \and global optimization \and local maxima}

\PACS{42.30.-d \and 02.30.Zz \and 02.70.Rr \and 02.60.Cb}
\end{abstract}

\def\bzero{{\mbf{0}}}
\section{Introduction} \label{sec:intro}

Mathematical imaging problems are often formulated as non-convex energy minimization problems that impose desirable properties on the global optima, e.g., corresponding to a denoised, deblurred, or segmented image. Much of the work of Mila Nikolova addressed the problem of local and global optima. As stated succinctly in one of her early papers: ``The resultant ... energy generally exhibits numerous local minima. Calculating its local minimum, placed in the vicinity of the maximum likelihood estimate, is inexpensive but inadequate'' \cite{nikolova1999markovian}.  Study of local and global optima was a recurring theme in her work, in which she addressed the nature of objective functions associated with non-convex probabilistic models, i.e., maximum likelihood (ML) and maximum a posteriori (MAP) \cite{nikolova1997estimees}, \cite{nikolova1998inversion}, \cite{nikolova1999markovian}, \cite{nikolova2000segmentation}, \cite{alberge2006blind}, \cite{nikolova2007model}, as well as non-linear least squares \cite{durand2006stabilityI}, \cite{durand2006stabilityII}.  Some of the optimization algorithms she introduced were only shown to converge to one of several possible local optima.  For such algorithms, an important question is whether an observed convergent limit is, in fact, the global maximum.  Searching for and identifying the global maximum is the problem that we address in this paper.          

We consider this problem in the general setting of maximum likelihood parameter estimation from multiple samples from a probability distribution that belongs to a parametric family. The conceptual simplicity and tractability of the Maximum Likelihood (ML) principle, along with its theoretical optimality properties, has made ML approaches prevalent in many fields. Yet, questions surrounding its practical application remain open. The pioneering statistician Sir Ronald Fisher \cite{fisher:1925} was an early advocate of the ML approach and is generally credited with its development, although similar concepts predate Fisher's work. Stigler \cite{stigler:2007} provides a historical account of the theory's maturation throughout the nineteenth and twentieth centuries. Asymptotic (large sample) characterization of local vs. global optima of the likelihood function was established by Le Cam (c.f. \cite{le-cam:2012a}, ch. 6) using central limit theory, but can be challenging to apply in practice. As the sample size increases, it has long been known that, under mild smoothness conditions, all statistically consistent stationary points of the likelihood function converge with probability one to the global maximum \cite{cramer:1946,wald:1949}. The natural question then becomes: when there exist multiple stationary points, and the number of samples is finite, how can one identify the globally optimal one?

There exist general-purpose algorithms to address this question, e.g., simulated annealing and genetic algorithms.  These algorithms, however, are rarely applied to high-dimensional problems because of high computational demands \cite{andrieu:2000,andrieu:2001,sharman:1989}. Stationary points of the likelihood function can be readily found using iterative root-finding methods such as Quasi-Newton gradient descent \cite{nocedal:1999}. Once a stationary point is found, it would be useful to have access to a simple test to determine if it is globally optimal without knowing the maximum value of the likelihood function. Several such tests have been proposed for this purpose \cite{biernacki:2005},\cite{blatt:2007}. In this paper, the focus is on testing local maxima of the likelihood function in the context of high dimensional inverse problems arising in signal processing and imaging. 

Specifically, this paper makes the following contributions. Starting with the global maximum validation function introduced by Biernacki \cite{biernacki:2005}, we demonstrate that its mean is always less than or equal to zero when the likelihood function belongs to a generalized location family of distributions: distributions parameterized by a shift in location. This property provides the impetus for constructing a one-sided variant of the test. This generalized location family is relevant to many linear and non-linear inverse problems. Furthermore, we introduce a new approach of testing for the global maximum by expanding the parameter space to a higher dimension through a reparameterized embedding and defining an augmented validation function for testing local maxima. The augmented validation function can better discriminate between local and global maxima due to the expanded parameter space. We provide a computational procedure for identifying useful candidate embeddings that significantly improves the accuracy of the test. Significant accuracy and computational advantages are demonstrated for the application of camera blur-function estimation \cite{leblanc:2018}. In particular, when implemented as a multiple restart stopping criterion, the proposed global maximum testing procedure is shown to significantly reduce computation as compared to simulated annealing methods for finding global maxima.  

%
%
The remainder of the paper is organized as follows. Section~\ref{sec:description} describes the general global maximum testing problem and introduces a simple illustrative example used to demonstrate the key concepts. Section~\ref{sec:globalOptim} introduces the one-sided global maximum test as a variant of the two-sided test of Biernacki \cite{biernacki:2005}, develops the reparameterized embedding method, and proposes a numerical spectral embedding procedure for identifying good embeddings. These concepts and procedures are illustrated in the context of a simple non-linear maximum likelihood estimation example. Section~\ref{sec:wavefront} illustrates the proposed methods for application to camera blur-function estimation. 

\section{Problem Description}\label{sec:description}
\subsection{Background}
The problem setting is as follows.  The observed data comes in the form of a matrix $\bd=[\bd_1, \ldots, \bd_n]\in \mathbb R^{m\times n}$ where  the columns are independent and identically distributed (i.i.d.) realizations of an $m$-dimensional random vector $\bD_1$ having a parametric Lebesgue density $f\of{\bd_1;\btheta}$.  The joint probability distribution $f(\bd,\btheta)=\prod_{k=1}^n f(\bd_k,\btheta)$ of $\bD=[\bD_1, \ldots, \bD_n]$ is a known function of $\bd$ and $\btheta$, and is in a parametric family $\{f(\bd;\btheta): \bd\in \mathcal D, \btheta \in \bTheta\}$. Here the vector of parameters $\btheta$ is unknown, taking values in a parameter space $\bTheta$, an open subset of $ \mathbb R^p$, and the matrix of measurements $\bd$ takes values in a sample space $\mathcal{D} \subset \mathbb R^{m\times n}$.    For any mean square integrable function $X$ of $\bD$ we define the statistical expectation $E_{\btheta_0}[X(\bD)]=\int X(\bd)f\of{\bd;\btheta_0}d\mu(\bd)$, where $d\mu(\bd)$ indicates integration with respect to the Lebesgue measure on $\mathbb R^m$. The subscript $\btheta_0$ of the expectation operator $E_{\btheta_0}$ is called the {\em true value} of $\btheta$ as it parameterizes the underlying density $f(\bd;\btheta_0)$ generating the observations. 

As the columns of the measurement matrix $\bd$ are i.i.d. realizations, the log-likelihood function is
\begin{align}
\ell\of{\bd ;\btheta} = \frac{1}{n}\sum_{k=1}^n \ln f\of{\bd_k;\btheta}
\end{align}
The associated maximum likelihood estimator (MLE) $\hat\btheta:\mathcal{D} \rightarrow \bTheta$ is defined as the global maximum
\begin{align}
\hat\btheta_{\mathrm{Global}} = \argmax_{\btheta\in \bTheta} \ell\of{\bd;\btheta}.\label{eqn:MLE}
\end{align}
An estimator $\hat\btheta$ is said to be consistent (statistically consistent in norm) when $\lim_{n\rightarrow\infty}E_{\btheta_0}[\|\hat\btheta-\btheta_0\|^2 ] \rightarrow 0$.  We will assume that $f$ is continuously differentiable in $\btheta$ for all $\bd$, and define the score function as $\bs\of{\bd,\btheta}=\nabla_{\btheta} \ell\of{\bd,\btheta}$. The Fisher information matrix \\$\bI\of{\btheta_0} = E_{\btheta_0}[\bs\of{\bD,\btheta_0}\bs\of{\bD,\btheta_0}^T]$ is assumed to exist and be invertible, and the notation ``$\overset{P}{\to}$''and ``$\overset{D}{\to}$'' will be used to describe convergence in probability and distribution respectively. 

For the problem addressed in this paper,  the global maximum of the log-likelihood is unknown, and only a local maximum $\hat\btheta$ is available, which is not necessarily equal to $\hat\btheta_{\mathrm{Global}}$. For example, the local maximum could be the limit of a convergent gradient descent algorithm.  Given $\hat\btheta$,  the local maximum testing problem is to decide between the two hypotheses
\begin{align}
H_0: \hat\btheta=\hat\btheta_{\mathrm{Global}} \quad vs. \quad H_1:\hat\btheta \neq \hat\btheta_{\mathrm{Global}}.\label{eqn:globalHypothesisTest}
\end{align}
A test between $H_0$ and $H_1$ is defined as a binary valued function $\phi: \mathcal D \rightarrow \{0,1\}$ that maps the data $\bd $ to $0$ or $1$, indicating the decision $H_0$ or $H_1$, respectively.  
The accuracy of a test is measured by its probability of false alarm PFA=$E_{\btheta_0}[\phi|H_0]$ and its probability of detection PD=$E_{\btheta_0}[\phi|H_1]$.  If for two tests $\phi_1$ and $\phi_2$ having identical PFA,  PD of $\phi_1$ is greater than PD of $\phi_2$, then $\phi_1$ is said to be more powerful than $\phi_2$. 

Many approaches to the general hypothesis testing problem (\ref{eqn:globalHypothesisTest}) have been studied over the years. Blatt and Hero \cite{blatt:2007} presented a historical context, which is summarized here. The likelihood ratio test \cite{wilks:1938}, Wald test \cite{wald:1943}, and Rao score test \cite{rao:1948} are asymptotically equivalent tests as the number $n$ of samples approaches infinity. The likelihood ratio and Wald tests require the distribution under $H_0$ to be known, which for (\ref{eqn:globalHypothesisTest}) requires knowledge of the true parameter. On the other hand, the Rao score test, later independently discovered and popularized under the name Lagrange multiplier test \cite{silvey:1959}, can be implemented when the true parameter is unknown. Rao's test measures the Euclidean norm of the score function weighted by the inverse Fisher information evaluated at a local maximum $\xi_R = \frac{1}{p}\bs\of{\bd,\hat\btheta}^T \bI^{-1}\of{\hat\btheta}\bs\of{\bd,\hat\btheta}$. Gan and Jiang \cite{gan:1999} propose a similar test for consistency of a stationary point of the log-likelihood based on White's information test \cite{white:1982}. White's original work was concerned with testing for model misspecification under the assumption that the global maximum of the likelihood function had been located, and Gan uses the same test statistic but in the converse situation.   

The Rao test may be used to test for consistency of a local maximum of the log-likelihood function.  Unfortunately, Monte Carlo experiments indicate that this test may not be very powerful even in the univariate setting \cite{gan:1999,biernacki:2005}.  In  \cite{biernacki:2005} an improved test was proposed for testing consistency of a stationary point following ideas presented by Cox \cite{cox:1961,cox:1962}.  This test, called the Biernacki test, uses a bootstrap estimate to directly compare the observed value of the locally maximized log-likelihood to its statistical expectation.  Both the Rao score and the Biernacki tests fall under the more general M-testing framework described by Blatt and Hero \cite{blatt:2007}, where additional types of tests of local maxima are proposed.  

\subsection{Motivating Example}\label{sec:motivatingExample}
\label{sec:simpleModelD}
To illustrate the difficulties in testing local maxima of the log-likelihood 
consider the following one dimensional statistical estimation problem. Let $x\in [0,T]$ be a time interval and $x_i=iT/N$, $i=0, \ldots, N-1$.  The measurements $\{d(x_i)\}_{i=1}^N$ are a set of time samples of a sinusoidal signal in additive Gaussian noise 
\begin{align}
d(x) &= \sin(\theta_0 x)+\varepsilon(x), \hspace{0.1in} x \in [0,T]
 \label{eqn:simpleModelD}.
\end{align}
Here $\theta_0 \geq 0$ is an unknown sinusoidal frequency parameter to be estimated. 
More generally, it will be more convenient to express the measurement model in vector form
$\bd=[d(x_0), \ldots, d(x_{N-1})]^T$
\begin{align}
\bd &= \bmu\of{\theta_0} + \bvarepsilon, \hspace{0.1in} \bvarepsilon \sim \mathcal{N}\of{\mbf{0},\sigma^2\bI_{N\times N}} \label{eqn:simpleModelA},
\end{align}
where $\bmu(\theta_0)=\sin(\theta_0 \bx)$ is the mean of $\bd$, a vector of time samples of the noiseless signal,  and $\bvarepsilon$ is an independent identically distributed (i.i.d.) zero mean $N$-dimenstional Gaussian noise with identity covariance matrix scaled by the variance parameter $\sigma^2$, which is assumed known. 
%
%
The maximum likelihood estimator $\hat\theta$ of the frequency parameter is then the globally optimal solution 
\begin{align}
\hat\theta_{\mathrm{Global}} &\defeq \argmin_{\theta\geq 0} \norm{\bmu\of{\theta}-\bd}^2. 
\label{eqn:simpleProblem}
\end{align}
For the sinusoidal signal in noise model (\ref{eqn:simpleModelD}), there will be local maxima of the log-likelihood function corresponding to the multiple stationary points of $\norm{\bmu\of{\theta}-\bd}^2$ over $\theta$. Figure~\ref{fig:exampleSignals}(a) shows two of these local maxima for the case that the noise variance $\sigma^2$ is zero. The solid curve in Figure~\ref{fig:exampleSignals}(a) corresponds to the true signal $\bmu(\theta)$, where $\theta=\theta_0$ is the global maximum, and the dashed curve corresponds to another signal for which $\theta$ is a local (non-global) maximum. Also shown is noisy data $\bd$ generated with each of these two signals, corresponding to blue circles and red crosses, respectively, where the Gaussian noise variance is $\sigma^2=1$. The difficulty in perceiving differences between these two noisy signals suggests that distinguishing a sub-optimal local maximum from the global maximum will be challenging.   This becomes even more difficult for higher dimensional estimation problems occurring in imaging (c.f.~\cite{leblanc:2018}~Section~3c).     
%
\begin{figure}[ht]
\centering
\captionsetup{position=top,labelfont=bf,singlelinecheck=off,justification=raggedright}
\subfloat[]{\hspace{-2mm}\includegraphics[width=0.92\columnwidth]{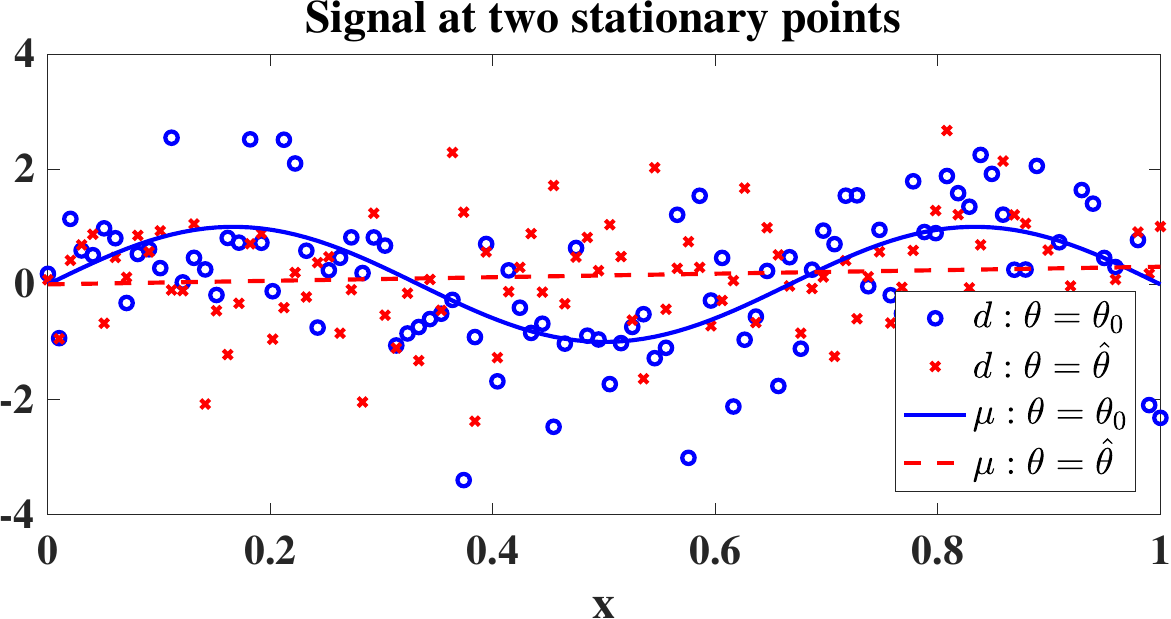}} \\ \vspace{-1em}
\subfloat[]{\hspace{-4mm}\includegraphics[width=0.95\columnwidth]{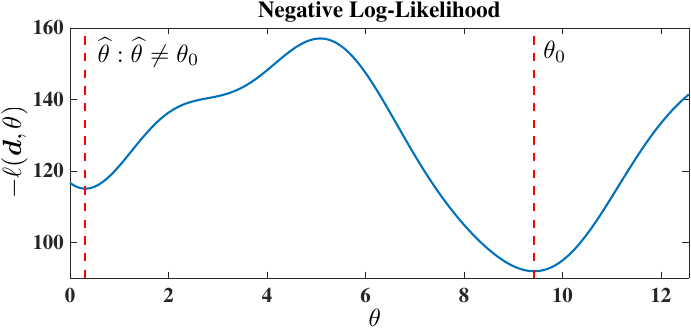}}
\caption{Realizations from a sinusoidal signal in Gaussian noise model for two values $\theta_0$ and $\hat\theta$ of the sinusoidal frequency parameter $\theta$ corresponding to a global maximum $\theta_0$ and a local minimum $\hat\theta$, respectively, of the log-likelihood $-\ell(\bd;\theta)$  (\ref{eqn:simpleProblem}) with $\sigma^2=1$. (a) Signal realizations (blue and red symbols) and the mean signal (blue and red curves) from the model when $\theta=\theta_0$ and $\theta=\hat\theta$, respectively.   (b) The negative log-likelihood plotted as a function of $\theta$ for the case $\sigma^2=0$. }
\label{fig:exampleSignals}
\end{figure}

\section{Tests for Local Optima}\label{sec:globalOptim}
In the hypothesis testing problem described by (\ref{eqn:globalHypothesisTest}), the null-hypothesis $H_0$ is that the discovered local maximum $\hat\btheta$ of the log-likelihood is a global maximum. It is important to note that a failure to reject the null hypothesis is not a positive statement about the global optimality of $\hat\btheta$. Instead, when a test accepts $H_0$, all that can be said is that it does not rule out the point as a local maximum with sufficient statistical certainty.

%

\def\bfm{{\mathbf m}}
\subsection{A Two-Sided Test}\label{sec:oneSidedTest} 
To test whether a local maximum of the likelihood function is, in fact, the global maximum one defines a suitable {\em global maximum validation function}  whose statistical distribution changes depending on whether the local maximum $\hat\btheta$ is global or not \cite{blatt:2007}.
Define the validation function 
\begin{align}
\varphi\of{\bd,\hat\btheta} \defeq \ell\of{\bd;\hat\btheta} - m(\hat\btheta,\hat\btheta),
\label{eq:validation}
\end{align}
where $m:\mathbb
R^p \times \mathbb R^p\rightarrow \mathbb R$ is the mean function
\begin{align}
    m(\btheta_0,\btheta_1)=\E[\btheta_0]{\ell\of{\bD;\btheta_1}},
    \label{eq:mdef}
\end{align}
This function is called the {\em ambiguity function} and is the statistical expectation under the distribution $f(\bd;\btheta_0)$ of the log-likelihood function $\ell(\bD,\btheta)$ evaluated at $\btheta=\btheta_1$. Assuming that the global maximum $\hat\btheta_{\mathrm{Global}}$ is near the true value  $\btheta_0$, under the null hypothesis $H_0$ we have $\hat\btheta=\hat\btheta_{\mathrm{Global}}$, and the distribution of $\varphi\of{\bD,\hat\btheta}$ will have approximately zero mean. On the other hand,  under the alternative hypothesis $H_1$ that $\hat\btheta$ is a non-global local maximum the mean of the distribution of $\varphi\of{\bD,\hat\btheta}$ will shift away from zero.   This is the key motivation for using the validation function (\ref{eq:validation}) to test for a global maximum.


For an i.i.d. data sample $\bD_1, \ldots, \bD_n$ the following asymptotic result was established in  \cite[Theorem 2]{biernacki:2005}. Under $H_0$: 
\begin{align}
\frac{1}{\sqrt{n}}\sum_{k=1}^n\varphi\of{\bD_k,\hat\btheta}  \overset{D}{\to} \mathcal{N}\of{0, \Var[\btheta_0]{\ell\of{\bD_1,\btheta_0}} },\label{eqn:H0Normal}
\end{align}
where $\Var{\ell\of{\bD_1,\btheta_0}}$ is the variance of the log-likelihood function for a single data sample ($n=1$).
Recalling the definition of the random data matrix $\bD=\{\bD_1, \ldots, \bD_n\}$,  this Gaussian limit motivates us to define the following test of the hypotheses $H_0: \hat{\btheta} = \hat{\btheta}_{Global}$ vs. $H_1: \hat{\btheta}\not=\hat{\btheta}_{Global}$  
\begin{align}
\frac{\(\ell\of{\bD;\hat\btheta}-m(\hat\btheta,\hat\btheta)\)^2} {v(\hat\btheta)}\; \; \overset{H_1}{\underset{H_0}\gtlt} \; \; \eta, \label{eqn:Biernacki}
\end{align}
where the function $v:\mathbb R^p \rightarrow \mathbb R$ is the variance $v(\theta)=\Var[\btheta]{\ell\of{\bD;\btheta}}$ under the distribution $f(\bd,\btheta)$ of the log-likelihood evaluated at $\btheta$, and  $\eta$ is a threshold selected to fix the false alarm probability equal to a suitably small number $\alpha\in[0,1]$. Under local asymptotically normal (LAN) conditions on the likelihood function  \cite{lehmann:1998} $\hat\btheta \overset{P}{\to}\btheta_0$ (a.s.) and the test statistic on the left hand side of (\ref{eqn:Biernacki}) has an approximately chi-square distribution under $H_0$. Hence $\eta$ can be selected as the $1-\alpha$ quantile of the chi-square distribution. In \cite{biernacki:2005} this test was implemented by approximating the mean $\E[\btheta_0]{\ell\of{\bD;\btheta_0}}$ and the variance $\Var{\ell\of{\bD;\btheta_0}}$ using a parametric bootstrap estimator.

The test (\ref{eqn:Biernacki}) is called a two-sided test because the condition for which the null hypothesis is accepted can be equivalently be expressed as
\begin{align*}
-\sqrt{\eta v(\hat\btheta)} \leq  \varphi\of{\bD,\hat\btheta}\leq \sqrt{\eta v(\hat\btheta)}.
\end{align*}
This is thus a test for which, as compared to the global maximum $\hat\btheta_{\mathrm{Global}}$, a sub-optimal local maximum $\hat\btheta$ will cause the test function to undergo a shift in mean, where the shift could either be in a positive or a negative direction.  

\subsection{A One-Sided Test}
If it were known {\em a priori} that a sub-optimal local maximum causes a negative shift in the mean of the global maximum validation function $\varphi\of{\bD,\hat\btheta}$, a one-sided test would be advantageous over a two-sided test. More specifically, a one-sided test would be expected to have higher power than the two-sided test (\ref{eqn:Biernacki}) when for all $\hat\btheta \neq \btheta_0$,
\begin{align}
m(\hat\btheta,\hat\btheta) \geq m(\btheta_0,\hat\btheta)
\label{eqn:oneSidedInequality},
\end{align}
where $m(\btheta_0, \btheta_1)$ is the ambiguity function defined in (\ref{eq:mdef}). 
When this condition is satisfied the two-sided test (\ref{eqn:Biernacki}) can be replaced by the one-sided test
\begin{align}
\frac{\ell\of{\bD;\hat\btheta} -m(\hat\btheta,\hat\btheta)} {\sqrt{v(\hat\btheta)}} \;\; \overset{H_0}{\underset{H_1}\gtlt} \;\; \eta_1.\label{eqn:rightTest}
\end{align}
%


The condition (\ref{eqn:oneSidedInequality}) is satisfied for many imaging and inverse problems. For example, consider the case where $\btheta$ is a clean image that one wishes to recover from samples of $\bD$, the output of an imaging sensor with known point spread function (forward operator) in additive correlated noise. When the point-spread function (PSF) and the covariance are known, we will show that this model always satisfies the inequality (\ref{eqn:oneSidedInequality}), and the one-sided test might be expected to lead to a better test for a global maximum.  Define $\btheta_0 \in \mathbb R^p$ as the vectorized true image to be recovered and $\bD\in \mathbb R^q$ as the random vectorized image acquired from the camera, which obeys the model: 
\begin{align}
    \bD = \bH \btheta_0 + \varepsilon \st \varepsilon \sim \mathcal{N}\of{\mbf{0},\bSigma} \label{eqn:inverseP},
\end{align}
where $\bH$ is a $q \times p$ matrix representing the forward operator and $\bSigma$ is the $q\times q$ camera noise covariance matrix. 

To show that (\ref{eqn:oneSidedInequality}) holds in this case, start with the log-likelihood function for the above model
\begin{align}
\ell\of{\bD;\btheta} =& -\frac{1}{2}\(\bH\(\btheta_0-\btheta\)+\varepsilon\)^T\bSigma^{-1} \(\bH\(\btheta_0-\btheta\)+\varepsilon\) \nonumber\\
	& \hphantom{-!} - \frac{1}{2}\ln{\det{\bSigma}} - \frac{q}{2}\ln{2\pi}. \label{eq:ncChi}
\end{align}
%
For any value of $\btheta$, (\ref{eq:ncChi}) is a quadratic form in $\varepsilon$ that is distributed non-central chi-squared with non-centrality parameter
\begin{align}
&\lambda = \(\btheta_0-\btheta\)^T \bH^T\bSigma^{-1}\bH \(\btheta_0-\btheta\).
\end{align}
The moment properties of the non-central chi-squared distribution \cite{kotz2004continuous} thus specify the statistical expectation of the log-likelihood function (\ref{eq:ncChi}) :
\begin{align}
\E[\btheta_0]{\ell\of{\bD;\btheta}} = -\frac{1}{2}\(q+\lambda\) - \frac{1}{2}\ln{\det{\bSigma}} - \frac{q}{2}\ln{2\pi}.\label{eqn:EEllNormal}
\end{align}
The difference $\E[\btheta]{\ell\of{\bD;\btheta}}-\E[\btheta_0]{\ell\of{\bD;\btheta}} = m(\btheta,\btheta)- m(\btheta_0,\btheta) = \lambda/2$, is non-negative, establishing that (\ref{eqn:oneSidedInequality}) holds as claimed.  
For this example, the unconstrained maximum likelihood estimator of $\btheta$ is a solution to a convex optimization problem, which is strictly convex when $\bH$ is full column-rank, and thus there will be no sub-optimal isolated local maxima of (\ref{eqn:MLE}).  
As our simple example in Figure \ref{fig:exampleSignals} illustrated, additional constraints can give rise to local maxima.
 
The condition (\ref{eqn:oneSidedInequality}) is satisfied for a more general class of camera models where the probability distribution of the data is in the generalized location family. 

\begin{definition}
Let $f(\bd;\btheta)$ be a distribution defined on $\bd\in \mathbb R^m$ parameterized by $\btheta\in \bTheta\subset \mathbb R^p$. The distribution belongs to the {\it generalized location family} of distributions if there exists a function $\bg:\mathbb R^p\rightarrow \mathbb R^m$ such that  $f\of{\bd;\btheta} = f\of{\bd-\bg\of{\btheta}}$ for all $\btheta \in \Theta$ and all $\bx \in \mathbb R^m$.
\end{definition}

Any camera model of the form $\bD_k=\bmu(\btheta_0)+\bvarepsilon_k$, $k=1, \ldots, n$ where $\bmu(\cdot)$ is a possibly non-linear function and $\bvarepsilon_k$ is i.i.d. but possibly non-Gaussian noise, will have a distribution that is in the generalized location family. 

\begin{theorem}\label{thm:1}
Let $\bD_1, \ldots, \bD_n$ be an i.i.d. sample and assume that $\bD_1$ has distribution 
 $f\of{\bd_1;\btheta}$ belonging to a generalized location family.  Then the inequality (\ref{eqn:oneSidedInequality}) holds.
\end{theorem}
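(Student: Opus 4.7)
The plan is to recognize the inequality $m(\hat\btheta,\hat\btheta) \geq m(\btheta_0,\hat\btheta)$ as the non-negativity of a Kullback--Leibler divergence between the base density and a translated copy of itself (Gibbs' inequality). First I would reduce to a single sample: by the i.i.d.\ assumption and linearity of expectation, $m(\btheta,\tilde\btheta) = E_{\btheta}[\ell(\bD;\tilde\btheta)] = E_{\btheta}[\ln f(\bD_1;\tilde\btheta)]$, so it suffices to prove the $n=1$ version.

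Next I would unwrap the generalized location structure by naming the base density $f_0$, so that $f(\bd_1;\btheta) = f_0(\bd_1 - \bg(\btheta))$ for every $\btheta$. Writing each expectation as an integral and applying the change of variables $\bu = \bd_1 - \bg(\hat\btheta)$ in the first, and $\bu = \bd_1 - \bg(\btheta_0)$ in the second, one finds with $\bDelta \defeq \bg(\btheta_0) - \bg(\hat\btheta)$ that
\begin{equation*}
m(\hat\btheta,\hat\btheta) - m(\btheta_0,\hat\btheta) \;=\; \int f_0(\bu)\, \ln\frac{f_0(\bu)}{f_0(\bu+\bDelta)}\, d\bu.
\end{equation*}
To conclude, I would observe that $\tilde f(\bu) \defeq f_0(\bu+\bDelta)$ is itself a probability density on $\mathbb R^m$ --- integrating to $1$ by translation invariance of Lebesgue measure --- so the right-hand side is precisely the Kullback--Leibler divergence $\mathrm{D}_{\mathrm{KL}}(f_0 \,\|\, \tilde f)$, which is non-negative by Gibbs' inequality.

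There is no substantive obstacle; once the location structure is unwrapped, the inequality collapses to a one-line KL argument. The only notational subtlety worth flagging is the need to distinguish the ``$f$'' appearing in the definition of the generalized location family (a fixed density on $\mathbb R^m$) from the parametric family $f(\cdot;\btheta)$ used elsewhere in the paper, which is why I introduce the separate symbol $f_0$. As a byproduct, the argument shows that the inequality is strict whenever the translate $f_0(\cdot+\bDelta)$ differs from $f_0$ on a set of positive measure (in particular whenever $\bg(\hat\btheta) \neq \bg(\btheta_0)$ and $f_0$ is not translation-invariant), a fact that will be useful for analyzing the power of the one-sided test proposed in Section~\ref{sec:globalOptim}.
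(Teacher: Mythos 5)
Your proof is correct and uses essentially the same ingredients as the paper's: a change of variables exploiting the location structure $f(\bd_1;\btheta)=f_0(\bd_1-\bg(\btheta))$ together with Gibbs' inequality (non-negativity of the Kullback--Leibler divergence), after the same i.i.d.\ reduction to $n=1$. The only cosmetic difference is that you merge into one display what the paper splits into two steps --- the KL bound $m(\btheta_0,\btheta_0)\geq m(\btheta_0,\hat\btheta)$ and the translation-invariance identity $m(\hat\btheta,\hat\btheta)=m(\btheta_0,\btheta_0)$ --- so the two arguments coincide.
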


\begin{proof}
The proof of the Theorem proceeds in two parts. The first part establishes that, for any parameters $\btheta$ and $\btheta_0$, the ambiguity function (\ref{eq:mdef}) satisfies: $m(\btheta_0,\btheta_0)\geq m(\btheta_0,\btheta)$ (Claim 1).  The second part establishes that, for $f(\bd;\btheta)$ in a generalized location family, $m(\btheta_0,\btheta_0)=m(\btheta,\btheta)$ (Claim 2).  Putting these two parts together implies
\begin{align*}
m(\btheta,\btheta)\geq m(\btheta_0,\btheta).
\end{align*}
The theorem then follows  upon specialization of this inequality to $\btheta=\hat{\btheta}$.

We recall the integral form for the mean function 
\begin{align}
m(\btheta_0,{\btheta})=&E_{\btheta_0}[\log f(\bD;\btheta)] \nonumber \\
=&\int f(\bd;\btheta_0)\log f(\bd;\btheta) \;d\mu(\bd) ,
\label{eq:gdefn}
\end{align}
and the identity $E_{\btheta_0}[\log f(\bD;\btheta)] =n E_{\btheta_0}[\log f(\bD_1;\btheta)]$, which follows from the i.i.d. assumption.

Claim 1 in this proof follows from the non-negativity property of the Kullback-Liebler (KL) divergence, a well known result in statistics and information theory \cite{kullback1997information}. For completeness, we give a self contained proof. Start with the expression:    
\begin{align}
m(\btheta_0,\btheta_0)-m(\btheta_0,\btheta)=-\int f(\bd;\btheta_0) \log\frac{f(\bd;\btheta)}{f(\bd;\btheta_0)}\; d\mu(\bd) \nonumber \\
\label{eq:KL}
\end{align} 
Now, using the elementary inequality $\log(u)\leq u-1$ and the fact that $\int f(\bd,\tilde\btheta) d\mu(\bd)=1$ for all $\tilde\btheta$, the right hand side of (\ref{eq:KL}) is non-negative. Therefore, using the definition (\ref{eq:gdefn}), 
this establishes the claim
\begin{align}
E_{\btheta_0}[\log f(\bD;\btheta_0)] \geq E_{\btheta_0}[\log f(\bD;\btheta)].
\label{eq:claim1}
\end{align}

Claim 2 of this proof is a direct result of $f(\bd_k;\btheta)$ being in the generalized location family. Specifically, 
\begin{align*}
m(\btheta_0,\btheta_0) =& \int f\of{\bd;\btheta_0} \log f\of{\bd;\btheta_0} \;d\mu(\bd)
\\
=& n\int f\of{\bd_1-g(\btheta_0)}\log f\of{\bd_1-g(\btheta_0)} \;d\mu(\bd_1)
\\
=& n\int f\of{\tilde\bd_1-g(\btheta)}\log f\of{\tilde\bd_1-g(\btheta)}\;d\mu(\tilde\bd_1)
\\
=& m(\btheta,\btheta)
\end{align*}
where the second equality comes from the generalized location family definition and the third equality follows from making the change of variable of integration $\tilde{\bd}_1=\bd_1+(g(\btheta_0)-g(\btheta))$. This establishes the Theorem. \qed
\end{proof}

\subsection{Reparameterized Embeddings}\label{sec:embed}

The detection performance of the one-sided and two-sided tests for the global maximum depends on how much shift a local maximum $\hat{\btheta}\in\bTheta$ causes in the distribution of the associated validation function (\ref{eq:validation}) introduced in Sec. \ref{sec:oneSidedTest}.  In this section, we explain how embedding the parameters into a higher dimensional parameter space and a modified validation function can increase this shift, leading to improved detection performance. 

The proposed approach can be viewed as analogous to the advantageous use of higher dimensional embeddings in other mathematical problems. For example, in mathematical imaging, the level-set method for image segmentation \cite{osher2004level,sethian1996fast} embeds a two-dimensional curve in the plane as a level set of a higher-dimensional parameterized surface. As another example, parameter expansion is applied to stabilize numerical solutions to non-linear differential equations \cite{wang2008nonlinear}. A similar approach is used in computational statistics for accelerating the convergence of parameter estimates in the iterative parameter expansion expectation-maximization (PX-EM) algorithm \cite{liu1998parameter}. Furthermore, in machine learning, the support vector machine (SVM) \cite{vapnik1998support} improves classification performance by representing the decision region in the native lower dimensions as a separating hyperplane in a much higher dimensional space.  In analogy to the above examples, the reparameterization embedding we propose will lead to significant improvements in testing for the global optimum.

We define the reparameterized embedding as follows. As above let the log-likelihood function $\ell\of{\btheta}$ be parameterized by a {\em native} parameter $\btheta \in \bTheta$.  Let $\btheta'\in \bTheta'$ be a fictitious parameter and define the expanded parameterization $\tilde{\btheta}=(\btheta, \btheta')$ living in $\tilde{\bTheta}=\bTheta\times \bTheta'$. The native parameter $\btheta$ is thus embedded in the cylinder set $\{\btheta\} \times \bTheta'$ of the higher dimensional space $\tilde{\bTheta}$.  Associated with this embedding, define the augmented log-likelihood function $\tilde\ell(\bd;\tilde\btheta)$ parameterized by $\tilde\btheta\in\tilde\bTheta$. 
We can link the embedded parameterization $\tilde\btheta=(\btheta,\btheta') \in \tilde\bTheta=\bTheta\times \bTheta'$ to the native parameterization $\btheta\in \bTheta$ by fixing $\btheta'$, which we can assume is equal to $\bzero$ without loss of generality. Thus the native parameterization is equal to a cross-section of the embedded parameterization $\{\tilde\btheta=(\btheta,\mbf{0}):\btheta\in \bTheta\}$ which gives  the relation between the native and augmented log-likelihood functions: $\ell(\bd;\btheta)=\tilde\ell(\bd;\btheta,\mbf{0})$.

\def\amax{{\mathrm{amax}}}
Define the difference
between the  augmented log-likelihood function and the native log-likelihood function:
\begin{align}
\tilde{g}(\bd;\btheta,\tilde\btheta) =\tilde{\ell}(\bd;\tilde\btheta) - \ell(\bd;\btheta), \;\; \btheta \in \bTheta, \tilde\btheta\in \bTheta \times \bTheta'.\label{eq:augdifference}
\end{align}

We first introduce a prototype for a new class of global maximum validation functions introduced in this section. This prototype, called the max gap function, is defined as the maximum of $\tilde g$ with respect to $\tilde\btheta$ over the restricted embedding space $\{\btheta\}\times \bTheta'$ 
\begin{align}
  G(\bd,\btheta)=\max_{\tilde\btheta\in\{\btheta\}\times \bTheta'}\tilde{g}\of{\bd;{\btheta},\tilde\btheta} .
  \label{eq:maxgap}
\end{align}
When $\btheta=\hat\btheta$ is a non-global maximum of the native log-likelihood function $\ell(\bd;\btheta)$,  $G(\bd,\hat\btheta)$  measures the gap between the augmented log-likelihood $\tilde\ell(\bd;(\hat\btheta,\btheta'))$ maximized over $\btheta'\in \bTheta'$ 
%
and $\ell(\bd;\btheta)$ evaluated at $\btheta=\hat{\btheta}$. We would like to design the reparameterized embedding to maximize this gap. A mean gap function would provide a criterion for doing this.    

For native parameters $\btheta_0,\btheta \in \bTheta$ define the max mean gap function of $\btheta\in \bTheta$ 
\begin{align}
m_G(\btheta_0,\btheta)=\max_{\btheta'\in \bTheta'}E_{\btheta_0}[\tilde{g}(\bD;\btheta, (\btheta,\btheta'))] .
\label{eq:maxmeangap}
\end{align}
The mean of the max gap function is lower bounded by the max mean gap function.  
\begin{theorem}
For any $\btheta_0$ and $\btheta$ in $\bTheta$ the max gap function (\ref{eq:maxgap}) and the max mean gap function (\ref{eq:maxmeangap}) are non-negative. Furthermore, $E_{\btheta_0}[G(\bD;\btheta)]\geq m_G(\btheta_0,\btheta)$ and $m_G(\btheta_0,\btheta_0)=0$.
\end{theorem}

\noindent{\em Proof}:
Non-negativity of $G(\bd;\btheta)$ follows from the relation $\max_{\btheta'\in \bTheta'} \tilde{g}(\bd;\btheta,(\btheta,\btheta'))\geq \tilde{g}(\bd;\btheta,(\btheta,\bm{0}))$, which is equal to zero  since  $\tilde\ell(\bd;(\btheta,\bm{0}))=\ell(\bd;\btheta)$.
An identical argument establishes non-negativity of $m_G(\btheta_0,\btheta)$. 
%
%
The stated inequality 
follows from  the fact that,  for any random variable $Y(u)$ depending on a non-random parameter $u$, 
$\max_u E[Y(u)]$ is no greater than 
$E[\max_u Y(u)]$.  

The proof that $m_G(\btheta_0,\btheta_0)=0$ uses an argument similar to what we used to establish Claim 1 in the proof of Theorem 1. 
As $E_{\btheta_0}[\ell(\bD;\btheta_0)]=m(\btheta_0,\btheta_0)$, 
$m_G(\btheta_0,\btheta_0)
=
\max_{\btheta'\in\bTheta'} E_{\btheta_0}[\tilde\ell(\bD;(\btheta_0, \btheta')]-m(\btheta_0,\btheta_0).$
$E_{\btheta_0}[\tilde\ell(\bD;(\btheta_0, \btheta'))]$ is of the form 
\begin{align*}
\int f_0(\bd;\btheta_0) \log f_1(\bd;\btheta_0, \btheta')d\mu(\bd),
\end{align*} 
where the densities $f_0$ and $f_1$ satisfy $f_1(\bd;\btheta_0,\mbf{0})=f_0(\bd,\btheta_0)$. Therefore,  invoking the non-negativity of the KL divergence, which led to inequality (\ref{eq:claim1}) in the proof of Theorem 1, the integral above takes its maximum over $\btheta' \in \bTheta'$ when $\btheta'=\bm{0}$ and by definition (\ref{eq:mdef}) this value is equal to $m(\btheta_0,\btheta_0)$. 
\qed  

If the prototype validation function $G(\bd;\hat\btheta)$ defined in (\ref{eq:maxgap})  were to be used to test a local maximum $\hat\btheta$,  Theorem 2  suggests that the function $m_G(\btheta_0,\btheta)$ defined in (\ref{eq:maxmeangap}) can be used to design reparameterized embeddings that induce the largest possible positive shifts in $G(\btheta_0,\btheta)$ when $\btheta$ deviates from the true parameter $\btheta_0$, which would be expected to occur if $\btheta=\hat\btheta$ were a non-global maximum of $\ell(\bd;\btheta)$. We describe a spectral embedding procedure below that implements such a design strategy.

While the prototype validation function $G$ in (\ref{eq:maxgap})  has the benefits of simplicity and the analytical lower bound in Thm 2, it has two deficiencies that motivate an alternative gap function, called $g$ and defined in (\ref{eq:augvalidation}) below. First, the maximization in $G$ is restricted to the sub-space of  embedded parameter values $\tilde\btheta =(\btheta,\btheta')\in\tilde\bTheta$ over which $\btheta=\hat\btheta$ is fixed. This restriction deprives the prototype of extra degrees of freedom, potentially reducing its sensitivity to non-global maxima. Second, the evaluation of $G$ requires a global maximization over $\btheta'\in\bTheta'$, which may be challenging in practice.

We next introduce an alternative gap function that overcomes these deficiencies: it expands the maximization in (\ref{eq:maxgap}) to the full embedding space $\tilde\bTheta=\bTheta \times \bTheta'$ but allows the global maximum to be replaced by a local maximum found by an iterative algorithm initialized at $\tilde\btheta=(\hat\btheta,\mbf{0})$.

\begin{definition}
Given a point $\btheta \in \bTheta$ and an iterative algorithm for finding local maxima of a function $f(\tilde\btheta)$, $\tilde\btheta\in \tilde\bTheta=\bTheta\times \bTheta'$,  the basin of attraction $S_f\of{\btheta} \subseteq \tilde\bTheta$  is defined as an open set containing the point $\tilde{\btheta}=(\btheta,\mbf{0})$ such that the algorithm converges to a local maximum of $f(\tilde\btheta)$ when initialized at that point. 
\end{definition}  

Let $\hat\btheta\in \bTheta$ be a local maximum of the native log-likelihood function $\ell(\bd;\btheta)$ and let $\hat{\tilde\btheta} \in S_{\tilde\ell}\of{\hat\btheta} \subseteq \tilde\bTheta$ be a local maximum of the augmented log-likelihood function $\tilde\ell(\bd;\tilde\btheta)$ in a basin of attraction containing $\tilde\btheta=(\hat\btheta,\bzero)$.   
As an alternative gap function to (\ref{eq:maxgap}) we define the {\em augmented validation function} 
\begin{align}
  g(\bd,\hat{\btheta})=\max_{\tilde\btheta \in S_{\tilde\ell\,}\of{\hat\btheta}}\tilde{g}\of{\bd;\hat{\btheta},\tilde\btheta}.
  \label{eq:augvalidation}
\end{align}
This measures the gap between the value of the native log-likelihood function evaluated at $\hat{\btheta}$ and the maximum of the augmented log-likelihood function in the neighborhood of $(\hat{\btheta}, \mbf{0})$. 

As an analog to (\ref{eq:maxmeangap}), for native parameters $\btheta_0,\btheta \in \bTheta$, define the locally maximized mean of this gap function, called  the {\em augmented ambiguity function}
\begin{align}
m_g(\btheta_0,\btheta)=\max_{\tilde\btheta \in S_\mu\of{\btheta}}E_{\btheta_0}[\tilde{g}(\bD;\btheta, \tilde\btheta)] ,
\label{eq:augambiguity}
\end{align}
where for a point $\btheta\in \bTheta$,  $S_\mu\of{\btheta}$ denotes a basin of attraction of the mean function $\mu(\tilde\btheta)=E_{\btheta_0}[\tilde\ell(\bD;\tilde\btheta)]$ containing the point $\tilde\btheta=(\btheta,\bzero)$. 
\begin{theorem}
For any $\btheta_0$ and $\btheta$ in $\bTheta$ 
\begin{align}
0\leq m_g(\btheta_0,\btheta) \leq E_{\btheta_0}\left[ \max_{\tilde\btheta \in S_{\mu}(\btheta)}\tilde{g}\of{\bD;\btheta,\tilde\btheta}\right] 
.
\label{eq:thm2reln}
\end{align}
The leftmost inequality is achieved with equality when $\btheta=\btheta_0$. 
\end{theorem}

\noindent{\em Proof}:
The proof is similar to the proof of Theorem 2 and relies on     
the fact that $(\btheta, \mbf{0})\in S_\mu(\btheta)$, by definition of the bassin of attraction $S_\mu(\btheta)$. In particular, this implies that
%
$m_g(\btheta_0,\btheta) \geq E_{\btheta_0}[\tilde\ell(\bD;(\btheta,\mbf{0}))-\ell(\bD;\btheta)]=0,
$
%
establishing the leftmost inequality in (\ref{eq:thm2reln}). The condition for equality in the leftmost inequality follows from the fact that  $E_{\btheta_0}[\tilde\ell(\bD;\tilde\btheta)]\leq m(\btheta_0,\btheta_0)$, achieving equality when $\tilde\btheta=(\btheta_0,\bzero)$.
The rightmost inequality in (\ref{eq:thm2reln}) follows from the same property of expectation of maximized random variables as used in the proof of Theorem 2.  \qed   

Under the condition that the basins of attraction $S_{\tilde\ell}(\hat\btheta)$  and $S_\mu(\hat\btheta)$ are equal, Theorem 3 would give the lower bound $m_g(\btheta_0,\hat\btheta)$ on the mean shift in $g(\bd,\hat\btheta)$. If the augmented log-likelihood is smooth and the data  $\bD=\{\bD_1,\ldots, \bD_N\}$ consists of  i.i.d. samples, one can expect this condition to be satisfied asymptotically in $N$ since by the law of large numbers $\tilde\ell(\bD,\tilde\btheta)$ converges almost surely to its mean $E_{\theta_0}[\tilde\ell(\bD,\tilde\btheta)]$.  Under such conditions, similarly to Theorem 2 for the gap function $G$ (\ref{eq:maxgap}), Theorem 3 can be used to justify the use of the mean function $m_g$ to explore candidate reparameterized embeddings using the augmented validation function $g$ (\ref{eq:augvalidation}). 

In analogy to (\ref{eqn:rightTest}), for gap functions $G$ or $g$, we propose  a one sided reparameterized embedding test of $H_0: \hat{\btheta} = \hat{\btheta}_{Global}$ vs. $H_1: \hat{\btheta}\not=\hat{\btheta}_{Global}$. For the augmented validation function $g$ the proposed test is of the form: 
\begin{align}
\frac{g\of{\bd,\hat\btheta} -m_g\of{\hat\btheta,\hat\btheta} }
{\sqrt{v_g\of{\hat\btheta}}} \;\; \overset{H_1}{\underset{H_0}\gtlt}\;\; \tau,
\label{eqn:relaxedTest}
\end{align}
and similarly for the max gap function $G$.
As above, for $\btheta_0,\btheta_1\in \bTheta$, $m_g\of{\btheta_0,\btheta_1} = \E[\btheta_0]{g(\bD,\btheta_1)}$  and we have defined the variance $v_g(\btheta_0)=\Var[\btheta_0]{g\of{\bD,\btheta_0}}$.

In the following sections we show examples of how a well chosen reparameterized embedding space can result in significant improvement of global maximum testing. Given a statistical model and a candidate embedding $\tilde{\btheta}\in\tilde\bTheta$. Below we specify a computational procedure for selecting a reparameterized embedding space $\tilde\bTheta$. The procedure is inspired by Rao's locally optimal test \cite{cox:1974,rao:1948} of the hypotheses  $H_0:\btheta=\btheta_0$ vs. $H_1: \btheta=\btheta_0+\bdelta$, where $\bdelta$ is a small local perturbation of magnitude $\|\bdelta\|=\alpha$. Rao's score test solves a generalized eigenvalue problem of the form $\max_{\|\bu\in\bTheta\|} {\left|\bu^T\bs(\bd,\btheta_0)\right|^2}/{\bu^T \bI(\btheta_0) \bu}$, where $\bs(\bd,\btheta)=\nabla_{\btheta} \ell(\bd;\btheta)$  is the score function and $ \bI(\btheta)=-E_{\btheta}[\nabla_{\btheta}^2 \ell(\bd,\btheta)]$ is the Fisher information matrix. The solution of the generalized eigenvalue problem gives the optimal direction vector $\bu=\bI^{-1/2}(\btheta_0)\bs(\bd,\btheta_0)$ 
specifying the locally optimal perturbation as $\bdelta=\alpha\bu$.  

\vspace{0.1in}

\noindent{\bf Spectral embedding procedure}: 
Motivated by Theorems 2 and 3 and Rao's test, we propose a heuristic procedure to improve on a randomly initialized embedding space using a singular value decomposition (SVD) on a set of Rao locally optimal direction vectors that are computed offline. The effectiveness of this procedure is demonstrated in the numerical results sections below. 

First, an $\epsilon$-net of sampled parameters $\btheta_0^{(i)}$, $i=1,\ldots, p$, is constructed on $\bTheta$. For each $\btheta_0^{(i)}$ the non-global local maxima $\{\hat{\btheta}^{(i,j)}\}_{i,j=1}^{p,q_i}$ of $m(\btheta_0^{(i)},\btheta)$ are found. The likelihood function is reparameterized  into an {\em initial}  $\tilde\bTheta$  space of higher dimensional and the mean  $m(\btheta_0^{(i)},\tilde\btheta)=E_{\btheta_0^{(i)}}[\tilde\ell(\bD;\tilde\btheta)]$ of the augmented log-likelihood function is computed.   Let $\tilde{\btheta}^{(i,j)} \in \bTheta \times \bTheta^{'}$ be the point in the initial space associated with $\hat{\btheta}^{(i,j)}$ for each $i=1,\ldots,p,\ j=1,\ldots,q_i$.  Then the solutions $\{\bu(\tilde\btheta^{(i,j)})\}_{i,j}$ of the generalized eigenvalue problem are arranged as the columns of a matrix $\bG$, whose left principal singular-vector $\br$ is taken as a basis for the final spectral embedding subspace in $\bTheta'$. If more than one additional embedding dimension is desired, this procedure can be repeated with previously identified embedded subspaces successively removed.

\subsection {Example: Sinusoidal frequency estimation} \label{sec:sinusoidExample}

We return to the sinusoid in additive Gaussian noise example presented in Section \ref{sec:simpleModelD} to illustrate the theory presented in the previous section and to demonstrate the advantages of the proposed one-sided version of the Biernacki test (\ref{eqn:rightTest}) and the one-sided reparameterized embedding test (\ref{eqn:relaxedTest}).  
%

We embed the one dimensional frequency parameter $\theta_0\in \mathbb{R}$ into the expanded parameter $\tilde{\btheta}=(\theta_0,\theta_1,\ldots, \theta_k) \in \mathbb{R}^{k+1}$ and link $\tilde\btheta$ to the augmented log-likelihood through the augmented mean function $\tilde\bmu \in \mathbb R^N$
\begin{align}
\tilde\bmu\of{\tilde\btheta}=\sin{\theta_0 \bx + \theta_1 \bx^2 + ... + \theta_k \bx^{k+1} }.
\label{eq:sinumodel}
\end{align}
This reparameterization embedding introduces variation into the instantaneous frequency of the sinusoid, which enhances the gap between the native and augmented log-likelihood functions at the local maxima, leading to improved detection performance using the test (\ref{eqn:relaxedTest}).  With this embedding, the max gap function $G$ defined in (\ref{eq:maxgap})  takes the form
\begin{align} 
&G(\bd;\hat{\theta}) = \label{eq:reparembedding} \\
=&\frac{1}{2\sigma^2}\sum_{i=1}^n\left(\|\bmu(\hat{\theta})-\bd\|^2  -\min_{\theta_1, \ldots, \theta_k} \| \tilde\bmu(\hat{\theta}, \theta_1, \ldots, \theta_k)-\bd\|^2\right).
\nonumber
\end{align}
The max mean gap function $m_G(\theta_0, \hat{\theta})$ (\ref{eq:maxmeangap}), which,  by Theorem 2, lower bounds the mean shift in $G(\bd;\hat{\theta})$, has the representation:
\begin{align*}
m_G(\theta_0, \hat{\theta})=\frac{n}{\sigma^2}\left(F(\theta_0,\hat\theta) -
\cos((\theta_0-\hat\theta) x)\right),
\end{align*}
where, for large $N$,
\begin{align*}
&F(\theta_0,\hat\theta) = \\
	&\hspace{7mm}\max_{\theta_1, \ldots, \theta_k}\int_{0}^T \sin(\theta_0x)\sin(\hat\theta x+\theta_1 x^2 +\ldots +\theta_{k} x^{k+1}) dx 
\end{align*}
is the maximum of a Fresnel integral.  The forms $g$ and $m_g$ for the augmented validation function (\ref{eq:augvalidation}) and the ambiguity function (\ref{eq:augambiguity}) are similar except that the minimizations include $\theta_0$ and a local minimization is performed. 

To simplify the discussion we will respectively reparameterize the augmented and native log-likelihood functions by the associated mean functions: $\tilde\bmu(\tilde\btheta)$, defined in (\ref{eq:sinumodel}), and  $\bmu(\theta)=\tilde\bmu(\theta,\bzero)$, which lie in different subspaces of $\mathbb R^N$.  With this reparameterization the native log-likelihood $\ell(\bd;\theta)$ becomes $\ell(\bd;\bmu)$, with $\bmu=\sin(\theta \bx)$ lying in a one dimensional subspace while the augmented log-likelihood $\tilde\ell(\bd;\tilde\btheta)$ becomes $\tilde\ell(\bd;\tilde\bmu)$, with $\tilde\bmu$ lying in a higher dimensional subspace. In particular, the spectral embedding procedure, described at the end of the previous subsection, yields a mean vector $\tilde\bmu$ lying in the two dimensional subspace  $\{\tilde\bmu:\tilde\bmu=\sin(\theta_0 \bx)+\theta_1 \br, \theta_0, \theta_1 \in \mathbb R\}$, where $\br$ is the left principal singular vector of a $N\times \(\sum_{i=1}^p q_i\)$. The obtained vector $\br$ is shown in Figure~\ref{fig:relaxationDimension} for the case that the number of time samples is $N=100$, the measurement time interval is $T=1$, the Gaussian noise variance is $\sigma^2=1$ and the local maxima search region $\theta \in [0,4\pi]$.   This particular reparameterized embedding vector $\br$ defines the embedded space used in all simulations described in this subsection.  

\begin{figure}[ht]
\centering
\includegraphics[width=.9\columnwidth]{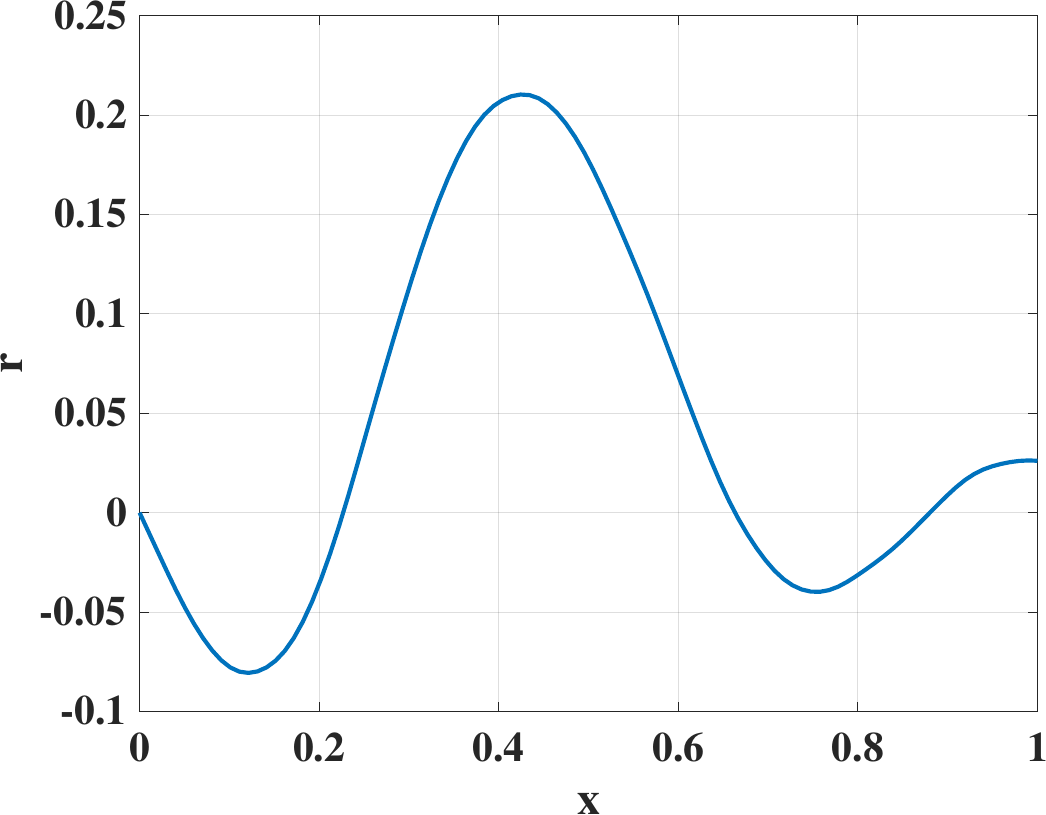}\hspace{1em} 
\caption{The principal left singular vector $\br$ computed by implementing the spectral procedure for learning embeddings, described at the end of Sec. \ref{sec:embed}, for the sinusoid in Gaussian noise example. The learned reparameterized embedding is the two dimensional subspace $\{\tilde\bmu \in \mathbb R^N:\tilde\bmu=\sin(\theta_0\bx)+\theta_1\br, \theta_0, \theta_1 \in \mathbb R\}$.
}
\label{fig:relaxationDimension}
\end{figure}


Figure~\ref{fig:relaxationMean} gives a graphical depiction of the proposed reparameterized embedding method in the context of this example. 
The left panel of the figure represents the case where the hypothesis $H_0:\hat\theta=\theta_{Global}$ is rejected, while in the right panel $H_0$ is not rejected. The figure shows the likelihood trajectories in the native parameterization $\bmu(\theta)$ as a green curve labeled $\mathcal U$.  The corresponding likelihood surface in the embedded parameterization $\tilde{\bmu}(\tilde\btheta)$ is shown as the disk labeled $\mathcal {\tilde U}$.    

%
%
The left and right panels in Figure \ref{fig:relaxationMean} each show two local maxima, a local maximum proximal to the true parameter $\theta_0$, denoted $\hat\theta_0$, and a local maximum distant from the true parameter, denoted $\hat\theta$. Associated with these local maxima are the expanded parameters $\tilde\btheta_0$ and $\hat{\tilde{\btheta}}$, respectively, that are depicted in basins of attraction $S\of{\hat\theta}$ of the augmented log-likelihood $\max_{\tilde\btheta}\tilde\ell\of{\bd, \bmu\of{\hat\theta, \tilde\btheta}}$.  On the right panel the proximal local maximum $\hat\theta_0$ is close to the true parameter $\theta_0$ so that the means $\mu\of{\hat\theta_0}$ and $\tilde\mu\of{\hat{\tilde\btheta}_0}$ are close to each other.
By contrast,  on the left panel these two means are not close to each other when the local maximum $\hat\theta$ is distant from $\theta_0$. The reparameterized embedding enhances this contrast by increasing the gap between the augmented log-likelihood and the native log-likelihood.  

\begin{figure}[ht]
\centering
\includegraphics[width=\columnwidth]{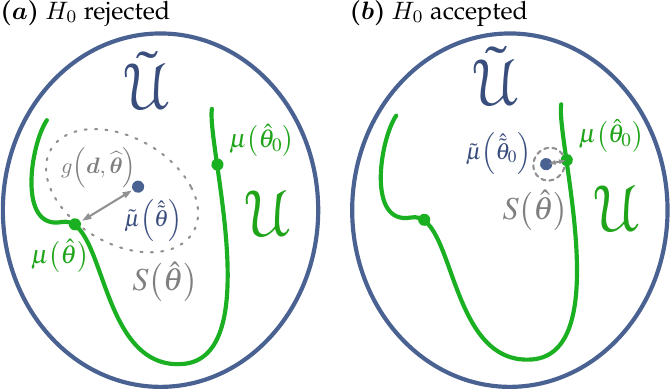}
\caption{(a) Reparameterized embedding depicted for the case where the augmented validation function $g$ (\ref{eq:augvalidation}) is used for testing for a global maximum. When initialized at $\tilde{\btheta}=\(\hat\btheta,\mbf{0}\)$ an iterative algorithm converges to a local maximum  $\hat{\tilde{\btheta}}$ in the basin of attraction $S\of{\hat\btheta}$ of $\hat\btheta$.  The large gap $g\of{\bd,\hat\btheta} = \ell\of{\bd,\tilde\bmu\of{\hat{\tilde\btheta}}} - \ell\of{\bd,\bmu\of{\hat\btheta}}$ causes a rejection of the null-hypothesis $H_0:\hat{\btheta}=\hat{\btheta}_0$.  (b) A maximum $\hat\btheta_0$ in the neighborhood of the true solution $\btheta_0$ leads to a smaller gap in $g$, and the null-hypothesis is accepted. }
\label{fig:relaxationMean}
\end{figure}

Figure~\ref{fig:relaxation2} shows the surface corresponding to the (negative) mean augmented log-likelihood,    $\{-E_{\theta_0}[{\ell}(\bd;\bmu(\tilde\btheta))]\}_{\tilde\btheta \in \tilde\Theta}$ 
as a surface over $\tilde\theta=(\theta,\theta^{'})$ along with its cross-section (blue curve) along the line $\{(\theta,0): \theta\in \Theta\}$,  which is the (negative) native mean log-likelihood trajectory $\{-E_{\theta_0}[\ell(\bd,\bmu(\theta))]\}_{\theta\in\Theta}$. The mean augmented log-likelihood has two local maxima, a global maximum $\hat{\tilde\btheta}_0=(\hat\theta_0,\hat\theta'_0)$ near $(\theta_0,0)$ and a local maximum $\hat{\tilde\btheta}=(\hat{\theta}, \hat{\theta}')$. The green curve traces out $-\max_{\theta'\in \Theta'} E_{\theta_0}[\ell(\bd,\tilde\bmu(\theta,\theta'))]$ and the gap between the blue and green curves is the augmented ambiguity function $m_g(\theta_0,\theta)$ (\ref{eq:augambiguity}). 

\begin{figure}[ht]
\centering
\includegraphics[width=0.9\columnwidth]{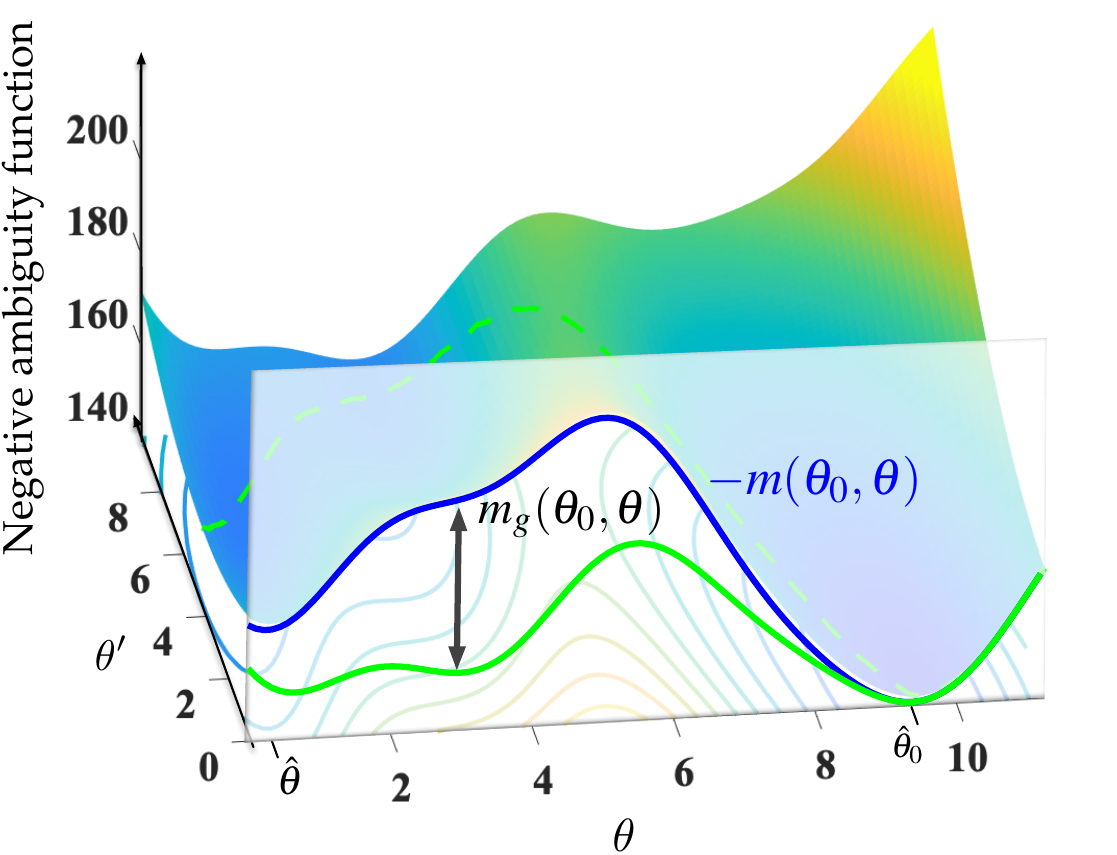}
\caption{The (negative) mean of the augmented log-likelihood surface associated with the sinusoid in additive Gaussian noise example given by (\ref{eqn:simpleModelA}) with reparameterized two dimensional embedding constructed from the principal singular vector $\br$ shown in Fig. \ref{fig:relaxationDimension}. The gap between the blue and green curves is the augmented ambiguity function $m_g(\theta_0,\theta)$ (\ref{eq:augambiguity}).
}
\label{fig:relaxation2}
\end{figure} 
\begin{figure}[ht]
\centering
\includegraphics[width=.9\columnwidth]{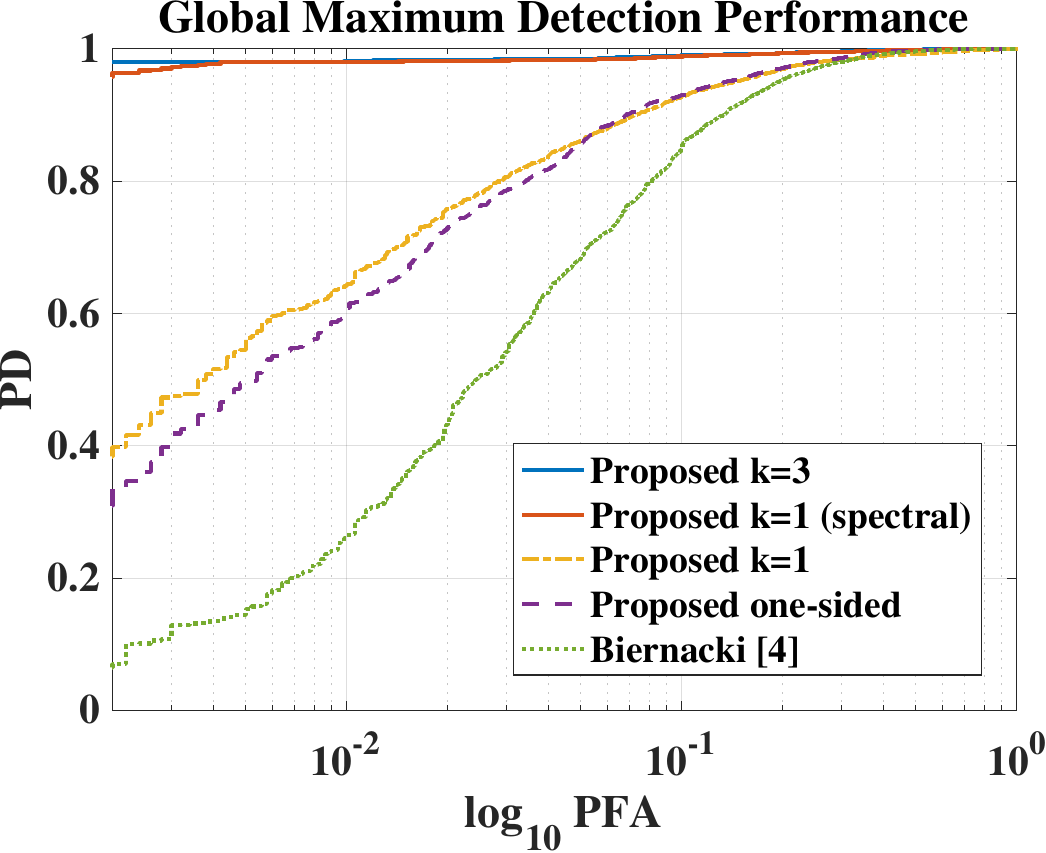}
\caption{Empirically estimated receiver operating characteristic (ROC) curve of probability of detection (PD) versus probability of false alarm (PFA) of the proposed one-sided test (\ref{eqn:rightTest}) and the proposed  reparameterized embedding test (\ref{eqn:relaxedTest}) with two and four embedding dimensions ($k=1$ and $k=3$)  as compared to Biernacki two sided test (Biernacki [4]) for frequency estimation of a sinusoid in additive Gaussian noise. The curve labeled ``Proposed $k=1$ (spectral)'' corresponds to the test (\ref{eqn:relaxedTest}) implemented with the two dimensional spectral embedding procedure using the principal direction vector $\br$ shown in Figure \ref{fig:relaxationDimension}.
}
\label{fig:relaxationNaiveExample}
\end{figure} 

Figure~\ref{fig:relaxationNaiveExample} demonstrates the improvements achieved using the proposed reparameterized embedding test (\ref{eqn:relaxedTest})  
with respect to testing accuracy, as quantified by receiver operating characteristic (ROC) curves. The ROC curve sweeps out the probability of detection (PD) against the probability of false alarm (PFA) achieved by a test that a local maximum is the global maximum. Tests with higher ROC curves are more accurate.  10,000 simulations were performed with the following parameters: a true sinusoidal frequency $\theta_0=3\pi$ and, $N=100$, $T=1$, $\sigma^2=1$, and search region $\theta\in [0,4\pi]$. In order of increasing accuracy these tests are: the two-sided Biernacki test \cite{biernacki:2005}, the one-sided version of the Biernacki test (\ref{eqn:rightTest}), equivalent to the proposed test with $k=0$, followed by the proposed one-sided with the embedding given by (\ref{eq:sinumodel}) for $k=1$ and $k=3$ respectively.  Using $k=3$ or the one-sided test based on a single embedding basis chosen according to the proposed spectral approach, achieves significantly better performance than Biernacki tests.  For all of the reparameterized embedding tests the maximization in the augmented validation function (\ref{eq:reparembedding}) was computed numerically using a limited-memory quasi-Newton solver \cite{nocedal:1999}.

\section{Application to Wavefront Sensing}\label{sec:wavefront}


We apply the proposed global maximum testing framework to the problem of jointly estimating camera blur and pose from a known calibration target in the presence of aliasing, studied in \cite{leblanc:2018}. Wavefront phase-aberrations characterize the camera's point spread function (PSF) through deviations of an otherwise ideal optical system, and a Zernike polynomial basis \cite{Zernike:1934} is used to parameterize these phase aberrations resulting in a parametric blur model. The solution to the inverse problem is the global maximum of a highly nonconvex log-likelihood function, that admits many local maxima \cite{barakat:1992,isernia:1995,gerwe:2008,moretta:2019}.  The inverse problem is typically solved using iterative optimization algorithms that converge to local maxima. Thus, in this context, a test for global maximum is a test of global convergence of the iterative algorithm. We demonstrate how the proposed test of global convergence can be used to reduce the computation burden. 


The spectral embedding procedure described in Section~\ref{sec:embed} is used in conjunction with the one-sided test (\ref{eqn:relaxedTest}) to substantially improve the detectability of convergence of the estimated blur to a suboptimal maximum of the log-likelihood function. Under the generalized imaging model \cite{goodman:1996}, the PSF $h$ is described in terms of a non-negative aperture function $A$ and a real-valued phase function $\Psi$, expressed as a linear combination of Zernike polynomials, where the parameter vector $\btheta$ contain the basis coordinates. Specifically, $h$ is given as
\begin{align*}
	h\of{x,y;\btheta} &= c_0\abs{g\of{w_x,w_y;\btheta}}^2  
	\\
	g\of{w_x,w_y;\btheta} &= \mathcal{F}^{-1}\of[]{A\of{w_x,w_y}\exp\[\Psi\of{w_x,w_y;\btheta}\]},
\end{align*}
where $c_0$ is a normalizing constant that ensures the PSF integrates to 1, $g\of{w_x,w_y;\btheta}$ is the coherent transfer function (CTF), and $\mathcal{F}^{-1}$ is the inverse Fourier transform. A test for global convergence is constructed using the proposed spectral method to embed this parametric blur model into the space of non-negative PSFs on a Nyquist-sampled grid. Monte Carlo simulations performed at a moderate signal-to-noise ratio (20 dB) and blur strength (0.025 waves RMS) demonstrate the substantial power of the proposed test. Over 100 such trials, a limited-memory quasi-Newton search started from the point representing no phase aberrations (an ideal imaging system) led to non-global local maxima 96\% of the time. Table~\ref{tbl:PSFTestPerformance} compares the power of the Biernacki's test \cite{biernacki:2005} with the proposed approach when both tests are operated at a false alarm rate of 0.01. The observed improvement is consistent with the simulation example described in Section~\ref{sec:sinusoidExample}.
\begin{table}[h]
\begin{center}
\begin{tabular}{l || c  c}
\bf Test							& \bf PFA    	& \bf PD 	\\
\hline
Biernacki \cite{biernacki:2005}		& $0.01$ 		& $0.22$	\\
Proposed (\ref{eqn:relaxedTest}) 	& $0.01$ 		& $1.0$
\end{tabular}
\end{center}
\caption{Empirical performance of global convergence tests for an iterative quasi-Newton maximum likelihood estimator of the camera point spread function (100 Monte-Carlo trials).  The probability of detection (PD) achieved by the proposed global maximum test (\ref{eqn:relaxedTest}) is perfect, while Biernacki's test [5] only attains $0.22$, when the probability of false alarm (PFA) is constrained to be $0.01$. }
\label{tbl:PSFTestPerformance}
\end{table}
%
 

Given a local maximum of the log-likelihood suspected of being a suboptimal solution, one would like to exploit knowledge of this local maximum to identify alternative regions of the parameter space likely to contain a better solution.  The PSF $\tilde{h}$ corresponding to a perturbed phase-screen $\Psi+\beta$ can be expressed as the modulus squared of a convolution of CTFs associated with $\Psi$ an $\beta$ respectively
\begin{align*}
\tilde h 	&= c_0\abs{ \mathcal{F}^{-1}\of[]{Ae^{j\Psi} A_Be^{j\beta}} }^2\\
	&= c_0\abs{ \mathcal{F}^{-1}\of[]{Ae^{j\Psi}} * \mathcal{F}^{-1}\of[]{A_Be^{j\beta}} }^2\\
	&= c_0\abs{ g * \tilde{g} }^2,
\end{align*}
where $A_B$ is the binary aperture corresponding to the support of $A$.  Letting $\[h\]_{m,n}$ be the $(m,n)^\textrm{th}$ element of a Nyquist sampled representation of the PSF, then a point-wise bound on the magnitude of the change induced by $\beta$ is given by
\begin{align}
\abs{[\epsilon]_{m,n}} =& \abs{ [h]_{m,n} - c_0\abs{[g]_{m,n}}^2 } \nonumber\\	
	\leq& \norm{\tilde{g}-a\delta}\[\norm{\tilde{g}-a\delta} + 2\frac{\abs{[g]_{m,n}}}{\norm{g}}\], \label{eqn:PSFBound}
\end{align}
where $\delta$ is the Kronecker delta function, and $a$ an arbitrary complex constant such that $\abs{a}=1$.  This point-wise bound on the PSF perturbation $\epsilon$, associated with the wavefront perturbation $\beta$, is minimized when $\angle a = \angle\[\tilde{g}\]_{0,0}$. Under this condition, the right-hand side of (\ref{eqn:PSFBound}) is monotonic in the Strehl ratio \cite{martial:1991} associated with $\beta$, which we will denote as $c_0\abs{\[\tilde{g}\]_{0,0}\of{\beta}}^2$. Thus, the set of wavefronts that maximize the Strehl ratio for a fixed root mean square (RMS) perturbation strength also minimizes the worst-case, point-wise error in the perturbed PSF.  These wavefronts are given by
\begin{align}
\{\beta = \argmax_{\tilde\beta}\ c_0\abs{\[\tilde{g}\]_{0,0}\of{\tilde\beta}}^2: \norm{\tilde\beta}^2=\tau\}.\label{eqn:maxStrehlSet}
\end{align}
A PSF $h$ perturbed by a wavefront in (\ref{eqn:maxStrehlSet}) will result in a new PSF $\tilde{h}$ that is point-wise close to $h$ despite its wavefront $\Psi+\beta$ being $\tau$ waves RMS from $\Psi$.  The proposed test for global convergence can be used in conjunction with this restarting strategy to search for globally optimal solutions.


A Monte Carlo simulation was performed to assess the efficacy of the proposed approach for identifying the global maximum of the likelihood function. A simulated imaging system \cite{leblanc:2018} was configured to provide moderate SNR images (20 dB), and the number of blur aberration parameters was varied to alter the difficulty of the resulting inverse problem. As the number of Zernike modes in the model increases, so does the probability of encountering local maxima.  
A limited-memory quasi-Newton search \cite{nocedal:1999} was used to identify stationary points of the log-likelihood starting from a diffraction-limited model. If the reparameterized embedding approach described in Section~\ref{sec:embed} failed to reject the null hypothesis at a false alarm level $\alpha=0.01$ the search was terminated, otherwise a new starting point 0.2 waves RMS away from the current maximum was chosen according to (\ref{eqn:maxStrehlSet}), and the search was continued. Figure~\ref{fig:optimizationMonteCarlo} shows the mean and standard errors of runtimes corresponding to 10 independent realizations of the same camera model. For comparison, the simulated annealing algorithm provided in \MATLAB \ Optimization Toolbox version 8.0 was used as a point of reference. The simulated annealing algorithm was provided the objective function gradients and was terminated according to an oracle criterion: terminate the first time that \emph{any local maximum fell within 0.01 waves RMS of the true solution}. In the astronomical imaging community, wavefront descriptions of optical systems typically include Zernike models up to at least radial-order 3 (7 Zernike modes). For models of such high complexity, the proposed reparameterized embedding strategy resulted in a five times reduction in total runtime. Figure~\ref{fig:PSFSearch} illustrates a typical sequence of PSFs associated with the global search procedure when 12 Zernike modes parameterize the blur. Despite the relatively small differences between the PSFs, the two non-global local maxima (local \#1 and local \#2) of the log-likelihood are associated with relatively large wavefront perturbations errors of 0.117 and 0.115 waves RMS, respectively.
\begin{figure}[ht]
\centering
\includegraphics[width=.9\columnwidth]{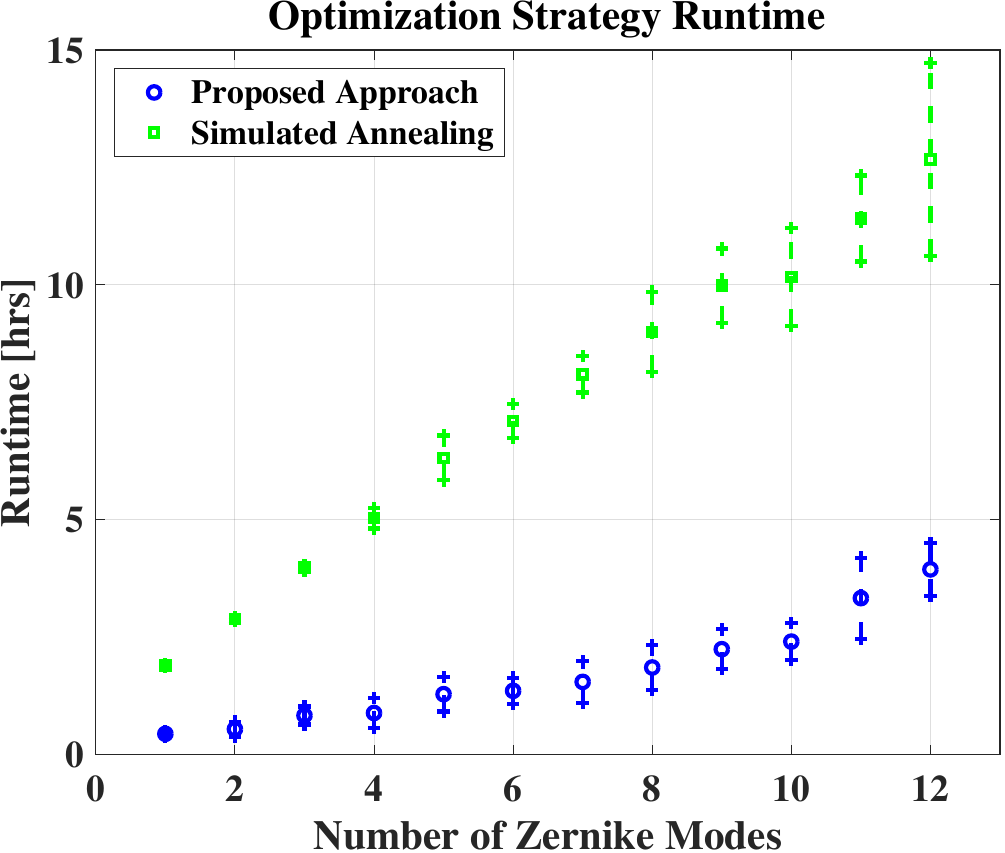}
\caption{Monte-Carlo study of optimization runtimes as a function of the number of aberration modes in the model. The proposed global convergence test reduces runtime by at least a factor of 5.}
\label{fig:optimizationMonteCarlo}
\end{figure} 
\begin{figure}[ht]
\centering
\includegraphics[width=\columnwidth]{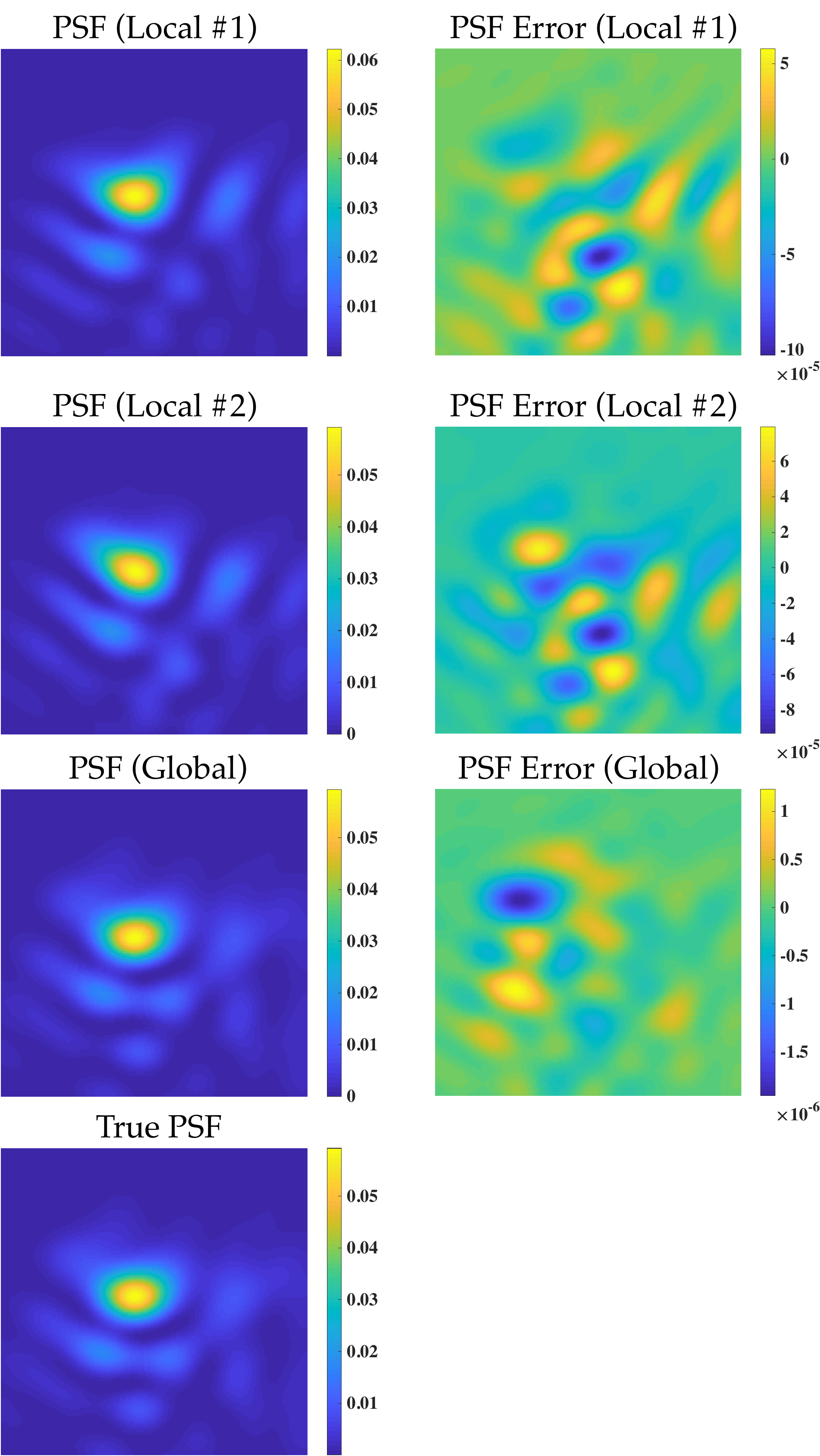}
\caption{A typical series of PSF estimates are shown alongside the true solution (left column) for local maxima of the log-likelihood function when 12 Zernike coefficients parameterize the PSF. The associated errors relative to the true solution are shown in the rightmost column.}
\label{fig:PSFSearch}
\end{figure}

\section{Concluding Remarks}\label{sec:conclude}

This paper addresses a principal computational bottleneck in non-convex imaging and vision problems: It determines if a local maximum found by a non-global optimization algorithm is a global maximum. Specifically, we introduced a powerful new method for validating that a local maximum is a global maximum when the objective function is specified as the likelihood function associated with a parametric statistical model. The proposed method implements a one-sided threshold test on a novel validation function defined as the normalized difference between the log-likelihood function and an augmented log-likelihood function, each evaluated at a local maximum point. The augmented log-likelihood is constructed by embedding the original parameter vector into a higher dimensional parameter space, a procedure we call reparameterized embedding, and the validation function is evaluated at the local maximum before thresholding. We proposed a computational spectral embedding procedure for identifying good reparameterized embeddings, and numerical results are presented exhibiting an extraordinarily high level of detection accuracy, e.g., achieving significantly better accuracy than the two-sided test proposed by Biernacki. Finally, to demonstrate how our results can dramatically impact non-convex imaging applications, we applied the proposed test to a set of local maxima generated from multiple restarts of an iterative maximum likelihood algorithm for reconstructing camera blur from images of a calibration target. When the test is used as a stopping rule, i.e., the restarts are stopped when the test declares a global maximum has been found, it reduced total runtime by a factor of five.

Code for reproducing the key figures from this document is available at \url{https://github.com/jwleblan/localMinima}.

\begin{acknowledgements}
This work was partially supported by ARO grant W911NF-15-1-0479 and a DOE NNSA grant to the University of Michigan Consortium on Verification Technology.  
\end{acknowledgements}

%
%

\bibliographystyle{spmpsci}      
\bibliography{extracted.bib}  	  

\begin{thebibliography}{10}
\providecommand{\url}[1]{{#1}}
\providecommand{\urlprefix}{URL }
\expandafter\ifx\csname urlstyle\endcsname\relax
  \providecommand{\doi}[1]{DOI~\discretionary{}{}{}#1}\else
  \providecommand{\doi}{DOI~\discretionary{}{}{}\begingroup
  \urlstyle{rm}\Url}\fi

\bibitem{alberge2006blind}
Alberge, F., Nikolova, M., Duhamel, P.: Blind identification/equalization using
  deterministic maximum likelihood and a partial prior on the input.
\newblock IEEE transactions on signal processing \textbf{54}(2), 724--737
  (2006)

\bibitem{andrieu:2000}
Andrieu, C., Doucet, A.: Simulated annealing for maximum a posteriori parameter
  estimation of hidden markov models.
\newblock Information Theory, IEEE Transactions on \textbf{46}(3), 994--1004
  (2000)

\bibitem{andrieu:2001}
Andrieu, C., Doucet, A., Fitzgerald, W.J.: An introduction to monte carlo
  methods for bayesian data analysis.
\newblock In: Nonlinear Dynamics and Statistics, pp. 169--217. Springer (2001)

\bibitem{barakat:1992}
Barakat, R., Sandler, B.H.: Determination of the wave-front aberration function
  from measured values of the point-spread function: a two-dimensional phase
  retrieval problem.
\newblock JOSA A \textbf{9}(10), 1715--1723 (1992)

\bibitem{biernacki:2005}
Biernacki, C.: Testing for a global maximum of the likelihood.
\newblock Journal of Computational and Graphical Statistics \textbf{14}(3),
  657--674 (2005)

\bibitem{blatt:2007}
Blatt, D., Hero, A.O.: On tests for global maximum of the log-likelihood
  function.
\newblock Information Theory, IEEE Transactions on \textbf{53}(7), 2510--2525
  (2007)

\bibitem{cox:1974}
Cox, D., Hinkley, D.: Theoretical Statistics.
\newblock Chapman and Hall, London (1974)

\bibitem{cox:1961}
Cox, D.R.: Tests of separate families of hypotheses.
\newblock In: Proceedings of the fourth Berkeley symposium on mathematical
  statistics and probability, vol.~1, pp. 105--123 (1961)

\bibitem{cox:1962}
Cox, D.R.: Further results on tests of separate families of hypotheses.
\newblock Journal of the Royal Statistical Society: Series B (Methodological)
  \textbf{24}(2), 406--424 (1962)

\bibitem{cramer:1946}
Cram\'{e}r, H.: Mathematical methods of statistics.
\newblock Princeton university press (1946)

\bibitem{durand2006stabilityI}
Durand, S., Nikolova, M.: Stability of the minimizers of least squares with a
  non-convex regularization. part i: Local behavior.
\newblock Applied Mathematics and Optimization \textbf{53}(2), 185--208 (2006)

\bibitem{durand2006stabilityII}
Durand, S., Nikolova, M.: Stability of the minimizers of least squares with a
  non-convex regularization. part ii: Global behavior.
\newblock Applied Mathematics and Optimization \textbf{53}(3), 259--277 (2006)

\bibitem{fisher:1925}
Fisher, R.A.: Theory of statistical estimation.
\newblock In: Mathematical Proceedings of the Cambridge Philosophical Society,
  vol.~22, pp. 700--725. Cambridge University Press (1925)

\bibitem{gan:1999}
Gan, L., Jiang, J.: A test for global maximum.
\newblock Journal of the American Statistical Association \textbf{94}(447),
  847--854 (1999)

\bibitem{gerwe:2008}
Gerwe, D.R., Johnson, M.M., Calef, B.: Local minima analysis of phase diverse
  phase retrieval using maximum likelihood.
\newblock In: The Advanced Maui Optical and Space Surveillance Technical
  Conference (2008)

\bibitem{goodman:1996}
Goodman, J.W.: Introduction to Fourier Optics.
\newblock McGraw-Hill (1996)

\bibitem{isernia:1995}
Isernia, T., Leone, G., Pierri, R.: Phase retrieval of radiated fields.
\newblock Inverse Problems \textbf{11}(1), 183 (1995)

\bibitem{kotz2004continuous}
Kotz, S., Balakrishnan, N., Johnson, N.L.: Continuous multivariate
  distributions, Volume 1: Models and applications, vol.~1.
\newblock John Wiley \& Sons (2004)

\bibitem{kullback1997information}
Kullback, S.: Information theory and statistics.
\newblock Courier Corporation (1997)

\bibitem{le-cam:2012a}
{Le Cam}, L.: Asymptotic methods in statistical decision theory.
\newblock Springer Science \& Business Media (2012)

\bibitem{leblanc:2018}
LeBlanc, J.W., Thelen, B.J., Hero, A.O.: Joint camera blur and pose estimation
  from aliased data.
\newblock J. Opt. Soc. Am. A \textbf{35}(4), 639--651 (2018)

\bibitem{lehmann:1998}
Lehmann, E., Casella, G.: Theory of Point Estimation.
\newblock Springer (1998)

\bibitem{liu1998parameter}
Liu, C., Rubin, D.B., Wu, Y.N.: Parameter expansion to accelerate em: the px-em
  algorithm.
\newblock Biometrika \textbf{85}(4), 755--770 (1998)

\bibitem{martial:1991}
Martial, G.: Strehl ratio and aberration balancing.
\newblock J. Opt. Soc. Am. A, Opt. Image Sci. (USA) \textbf{8}(1), 164 -- 70
  (1991)

\bibitem{moretta:2019}
Moretta, R., Pierri, R.: The ``traps'' issue in a non linear inverse problem:
  The phase retrieval in circular case.
\newblock In: 2019 PhotonIcs \& Electromagnetics Research Symposium-Spring
  (PIERS-Spring), pp. 552--559. IEEE (2019)

\bibitem{nikolova1997estimees}
Nikolova, M.: Estim{\'e}es localement fortement homogenes.
\newblock Comptes Rendus de l'Acad{\'e}mie des Sciences-Series I-Mathematics
  \textbf{325}(6), 665--670 (1997)

\bibitem{nikolova1999markovian}
Nikolova, M.: Markovian reconstruction using a gnc approach.
\newblock IEEE Transactions on Image Processing \textbf{8}(9), 1204--1220
  (1999)

\bibitem{nikolova2007model}
Nikolova, M.: Model distortions in bayesian map reconstruction.
\newblock Inverse Problems and Imaging \textbf{1}(2), 399 (2007)

\bibitem{nikolova2000segmentation}
Nikolova, M., Hero, A.: Segmentation of a road from a vehicle-mounted radar and
  accuracy of the estimation.
\newblock In: Proceedings of the IEEE Intelligent Vehicles Symposium 2000 (Cat.
  No. 00TH8511), pp. 284--289. IEEE (2000)

\bibitem{nikolova1998inversion}
Nikolova, M., Idier, J., Mohammad-Djafari, A.: Inversion of large-support
  ill-posed linear operators using a piecewise gaussian mrf.
\newblock IEEE Transactions on Image Processing \textbf{7}(4), 571--585 (1998)

\bibitem{nocedal:1999}
Nocedal, J., Wright, S.: Numerical Optimization.
\newblock Springer (1999)

\bibitem{noll:1976}
Noll, R.J.: Zernike polynomials and atmospheric turbulence.
\newblock J. Opt. Soc. Am. \textbf{66}(3), 207--211 (1976)

\bibitem{osher2004level}
Osher, S., Fedkiw, R., Piechor, K.: Level set methods and dynamic implicit
  surfaces.
\newblock Appl. Mech. Rev. \textbf{57}(3), B15--B15 (2004)

\bibitem{rao:1948}
Rao, R.C.: Large sample tests of statistical hypotheses concerning several
  parameters with applications to problems of estimation.
\newblock In: Mathematical Proceedings of the Cambridge Philosophical Society,
  vol.~44, pp. 50--57. Cambridge Univ Press (1948)

\bibitem{sethian1996fast}
Sethian, J.A.: A fast marching level set method for monotonically advancing
  fronts.
\newblock Proceedings of the National Academy of Sciences \textbf{93}(4),
  1591--1595 (1996)

\bibitem{sharman:1989}
Sharman, K., McClurkin, G.: Genetic algorithms for maximum likelihood parameter
  estimation.
\newblock In: IEEE International Conference on Acoustics, Speech and Signal
  Processing, pp. 2716 -- 19. New York, NY, USA (1989)

\bibitem{silvey:1959}
Silvey, S.D.: The lagrangian multiplier test.
\newblock Ann. Math. Statist. \textbf{30}(2), 389--407 (1959)

\bibitem{stigler:2007}
Stigler, S.M., et~al.: The epic story of maximum likelihood.
\newblock Statistical Science \textbf{22}(4), 598--620 (2007)

\bibitem{vapnik1998support}
Vapnik, V.: The support vector method of function estimation.
\newblock In: Nonlinear Modeling, pp. 55--85. Springer (1998)

\bibitem{wald:1943}
Wald, A.: Tests of statistical hypotheses concerning several parameters when
  the number of observations is large.
\newblock Transactions of the American Mathematical society \textbf{54}(3),
  426--482 (1943)

\bibitem{wald:1949}
Wald, A.: Note on the consistency of the maximum likelihood estimate.
\newblock The Annals of Mathematical Statistics \textbf{20}(4), 595--601 (1949)

\bibitem{wang2008nonlinear}
Wang, S.Q., He, J.H.: Nonlinear oscillator with discontinuity by
  parameter-expansion method.
\newblock Chaos, Solitons \& Fractals \textbf{35}(4), 688--691 (2008)

\bibitem{white:1982}
White, H.: Maximum likelihood estimation of misspecified models.
\newblock Econometrica: Journal of the Econometric Society pp. 1--25 (1982)

\bibitem{wilks:1938}
Wilks, S.S.: The large-sample distribution of the likelihood ratio for testing
  composite hypotheses.
\newblock Ann. Math. Statist. \textbf{9}(1), 60--62 (1938)

\bibitem{Zernike:1934}
Zernike, v.F.: Beugungstheorie des schneidenver-fahrens und seiner verbesserten
  form, der phasenkontrastmethode.
\newblock Physica \textbf{1}(7), 689--704 (1934)

\end{thebibliography}

%
%

\end{document}